\documentclass[article,12pt]{elsarticle}

\usepackage{amsmath}
\usepackage{amssymb}
\usepackage{amsthm}   
\usepackage{geometry}
\usepackage{graphicx}
\usepackage{booktabs} 
\usepackage{hyperref} 
\usepackage{lineno} 
\usepackage{subcaption} 
\usepackage{float} 
\usepackage{multirow} 
\usepackage{xcolor}
\usepackage{bm}

\theoremstyle{plain}     
\newtheorem{theorem}{Theorem}[section]   

\theoremstyle{remark}    
\newtheorem{remark}[theorem]{Remark}     

\newcommand{\p}{\partial}

\begin{document}

\begin{frontmatter}

\title{Purified Two-Relaxation-Time Lattice Boltzmann Method: Removing Ghost Modes from TRT for Enhanced Stability}

\author[a,b,c]{Yuan Yu\corref{cor1}}
\ead{yuyuan@xtu.edu.cn}
\author[a]{Yaolong Yu}
\author[a]{Yuting Zhou}
\author[a]{Siwei Chen}
\author[a]{Haizhuan Yuan\corref{cor1}}
\ead{yhz@xtu.edu.cn}
\author[a]{Shi Shu\corref{cor1}}
\ead{shushi@xtu.edu.cn}

\cortext[cor1]{Corresponding author}
\address[a]{School of Mathematics and Computational Science, Xiangtan University, Xiangtan, 411105, China}
\address[b]{National Center for Applied Mathematics in Hunan, Xiangtan, 411105, China}
\address[c]{Hunan Key Laboratory for Computation and Simulation in Science and Engineering, Xiangtan University, Xiangtan, 411105, China}

\begin{abstract}
	The two-relaxation-time (TRT) lattice Boltzmann model is widely adopted for its algorithmic simplicity and tunable boundary accuracy. However, its collision operator relaxes the full symmetric non-equilibrium component, which implicitly retains non-hydrodynamic ghost modes that degrade numerical stability, particularly at high Reynolds numbers. In this work, we establish a rigorous theoretical connection between ghost-mode filtering and regularization within the TRT framework. Specifically, by invoking the orthogonal decomposition of the discrete velocity space into hydrodynamic and non-hydrodynamic subspaces, we prove that the recently proposed TRT-regularized lattice Boltzmann (TRT-RLB) model is mathematically equivalent to the standard TRT model with its ghost modes explicitly removed. This equivalence holds exactly for the D2Q9 and D3Q19 lattices, where the symmetric and antisymmetric subspaces are completely spanned by the physically relevant Hermite modes and a finite number of identifiable ghost modes. Based on this finding, we propose the Purified TRT (P-TRT) model, which achieves regularization-level stability through simple algebraic ghost-mode subtraction rather than expensive tensor projections. For the D2Q9 lattice, the total non-equilibrium collision cost is reduced from 180 to 52 floating-point operations per node, representing a 71\% reduction. A linear stability analysis in the moment space further reveals that the P-TRT collision operator annihilates the ghost eigenvalue, proving that its spectral radius is bounded above by that of the standard TRT and that its numerical stability is governed exclusively by the hydrodynamic modes. Benchmark simulations, including the double shear layer at Reynolds numbers up to $10^7$, the decaying Taylor--Green vortex, the force-driven Poiseuille flow, and creeping flow past a square cylinder, confirm that the P-TRT model preserves the stability, second-order accuracy, and zero-slip boundary properties of the original TRT-RLB formulation while retaining the simplicity characteristic of the TRT family.
\end{abstract}

\begin{keyword}
	Lattice Boltzmann method \sep Two-relaxation-time \sep Ghost modes \sep Regularization \sep Non-hydrodynamic modes \sep Orthogonal decomposition
\end{keyword}

\end{frontmatter}


\section{Introduction}
\label{sec:1}

The lattice Boltzmann method (LBM) has become a popular numerical tool in computational fluid dynamics (CFD) over the last two decades. Distinct from traditional approaches that discretize macroscopic continuum equations, the LBM simulates fluid dynamics from a mesoscopic kinetic perspective. This fundamental difference confers unique advantages, notably inherent parallelism and the capability to effortlessly handle complex and moving boundary conditions. Consequently, the versatility of LBM has led to its widespread adoption across a broad spectrum of complex flow phenomena, including turbulent flows \cite{oh2025numerical,montessori2024high,guo2020improved,suga2015D3Q27}, multiphase flows \cite{lulli2024metastable,huang2024three}, transport in porous media \cite{zhang2025infiltration,sourya2024comparative}, combustion processes \cite{hosseini2024lattice,stockinger2024lattice,lei2021study,tayyab2020hybrid}, evaporation and phase-change phenomena \cite{fei2023lattice,kumar2023lattice,li2016pinning}, and biological systems such as blood flow \cite{cherkaoui2023numerical,cherkaoui2022magnetohydrodynamic,esfahani2020lattice}. For a deeper foundational understanding of the method and its diverse applications, several comprehensive reviews \cite{hosseini2024lattice,tiribocchi2025lattice,lallemand2021lattice} and monographs \cite{kruger2017lattice,guo2013lattice} are recommended.

In the majority of practical implementations, the standard lattice Bhatnagar--Gross--Krook (LBGK) model \cite{bhatnagar1954model,qian1992lattice} remains the most widely adopted choice due to its algorithmic simplicity. Despite its popularity, the single-relaxation-time formulation imposes inherent limitations, often leading to numerical instability at high Reynolds numbers and generating viscosity-dependent errors at solid boundaries. To overcome these limitations, two principal lines of development have emerged. The first line focuses on enriching the collision operator by introducing multiple relaxation rates. Representative methods include the multiple-relaxation-time (MRT) model \cite{lallemand2000theory,MRT1,du2006multi}, which relaxes different moments at independent rates, and the cumulant model \cite{ning2016numerical,fei2017consistent,geier2006cascaded,premnath2009incorporating}, which relaxes cumulants instead of raw moments to further enhance Galilean invariance. Among these, the two-relaxation-time (TRT) model \cite{TRT2008study,TRT2008two,ginzburg2005equilibrium} stands out for its elegant simplicity: it distinguishes only between symmetric and antisymmetric relaxation processes, yet this minimal extension is sufficient to eliminate the viscosity-dependent numerical slip at bounce-back boundaries through an appropriately chosen free parameter. This combination of simplicity and boundary accuracy has made the TRT model one of the most widely used LBM formulations in practice. The second line of development is the regularized lattice Boltzmann (RLB) method \cite{latt2006lattice,RLB,RLB1,RLB_yuan47,yang2024efficient,yu2026improved}, which filters out high-order non-equilibrium modes before the collision step. By reconstructing the non-equilibrium distribution function solely from its hydrodynamically relevant components, the RLB approach effectively suppresses spurious modes and delivers superior stability in high-Reynolds-number simulations \cite{feng2019solid}. However, standard regularized models still suffer from numerical slip when using the simple bounce-back boundary scheme.

A key concept that connects these two lines of development is that of \emph{non-hydrodynamic ghost modes}. In the discrete velocity space of any lattice model, the total number of degrees of freedom exceeds the number of conserved hydrodynamic quantities. The surplus degrees of freedom correspond to non-physical modes, termed ghost modes, which do not participate in the macroscopic conservation laws but can be excited during the simulation \cite{lallemand2000theory}. In the landmark analysis of the MRT collision operator, Lallemand and Luo \cite{lallemand2000theory} demonstrated that the damping rates of these ghost modes critically affect the numerical stability of the scheme. When ghost modes are under-damped, they can grow uncontrollably and trigger numerical divergence, especially in under-resolved or high-Reynolds-number simulations. From this perspective, the regularization procedure can be interpreted as a spectral filter that projects the non-equilibrium distribution onto the hydrodynamic subspace, thereby annihilating all ghost modes. Conversely, the standard TRT collision operator relaxes the full symmetric and antisymmetric non-equilibrium components, which implicitly retain the ghost modes. This observation suggests a natural question: can one achieve regularization-level stability within the TRT framework by explicitly removing only the ghost modes, without resorting to the full Hermite tensor machinery?

To combine the stability of regularization with the boundary accuracy of TRT, Yu et al.\ \cite{WOS:001361481500001} recently proposed the TRT-RLB model. This model performs a third-order Hermite expansion of the non-equilibrium distribution function, introducing a free relaxation parameter $\tau_{s,2}$ for the third-order moments. Numerical experiments demonstrated that the TRT-RLB model achieves both high stability and zero numerical slip. However, the original formulation raises two important issues. On the practical side, the explicit computation of third-order Hermite projections involves complex tensor contractions, significantly increasing the floating-point operations per time step. On the theoretical side, the relationship between the Hermite-based regularization and the underlying structure of the TRT collision operator remains unclear: what exactly does the regularization remove, and how does it relate to the non-hydrodynamic modes of the lattice?

In this work, we address both questions simultaneously. By invoking the orthogonal decomposition of the discrete velocity space into hydrodynamic and non-hydrodynamic subspaces, we prove that the Hermite-based regularization employed in the TRT-RLB model is \emph{mathematically equivalent} to the explicit removal of ghost modes from the standard TRT collision operator. For the D2Q9 and D3Q19 lattices, this equivalence is exact: the symmetric subspace is completely spanned by the mass mode, the stress modes, and a finite number of identifiable ghost modes, while the antisymmetric subspace is exactly spanned by the momentum modes and the energy-flux modes. As a consequence, the expensive tensor projection can be replaced by a simple algebraic subtraction of the ghost-mode contribution, yielding the same post-collision state at a fraction of the computational cost. Based on this finding, we propose the \emph{Purified TRT} (P-TRT) model, which upgrades the standard TRT by explicitly filtering out its ghost modes. For TRT practitioners, P-TRT offers a minimal and physically transparent modification that elevates the stability to the regularization level. For TRT-RLB users, it provides an equivalent but substantially more efficient algorithm. A linear stability analysis in the moment space further demonstrates that the ghost-mode annihilation renders the P-TRT scheme provably no less stable than the standard TRT, with the ghost eigenvalue fixed at zero. Benchmark simulations confirm that the P-TRT model preserves the stability, accuracy, and zero-slip boundary properties of the original TRT-RLB formulation while achieving a 71\% reduction in non-equilibrium collision operations for the D2Q9 lattice.

The remainder of this paper is organized as follows. Section~\ref{sec:model} presents the mathematical framework: a brief recap of the original TRT-RLB model, the orthogonal decomposition theory establishing the equivalence between Hermite projection and ghost-mode filtering, the resulting P-TRT formulation with its optimal implementation strategy, and a linear stability analysis comparing the spectral properties of the P-TRT and standard TRT collision operators. Section~\ref{sec:3} provides numerical validation through four benchmark problems: the double shear layer, the decaying Taylor--Green vortex, the force-driven Poiseuille flow, and the creeping flow past a square cylinder. Finally, conclusions are drawn in Section~\ref{sec:4}.

\section{The Purified TRT (P-TRT) Lattice Boltzmann Model}
\label{sec:model}

\subsection{Lattice Boltzmann Framework and Notation}

The macroscopic fluid dynamics are governed by the weakly compressible Navier--Stokes equations (NSEs). Utilizing Einstein's summation notation, they are expressed as:
\begin{subequations}\label{eq:NSE}
	\begin{align}
		\p_t \rho + \p_\alpha (\rho u_\alpha) &= 0, \\
		\p_t (\rho u_\alpha) + \p_\beta (\rho u_\alpha u_\beta + \rho c_s^2 \delta_{\alpha\beta}) &= \p_\beta [\mu (\p_\alpha u_\beta + \p_\beta u_\alpha)] + F_\alpha,
	\end{align}
\end{subequations}
where $\rho$ is the fluid density, $\mu$ the dynamic viscosity, $\delta_{\alpha\beta}$ the Kronecker delta, and $c_s$ the lattice speed of sound. The terms $u_\alpha$ and $F_\alpha$ represent the velocity and body force components in the $\alpha$-direction, respectively.

This work adopts the standard D2Q9 lattice model, where the discrete velocities $\boldsymbol{e}_i$ and weight coefficients $w_i$ are given by:
\begin{equation}\label{eq:D2Q9}
	\begin{pmatrix}
		e_{ix} \\
		e_{iy}
	\end{pmatrix} = c \begin{bmatrix}
		0 & 1 & 0 & -1 & 0 & 1 & -1 & -1 & 1 \\
		0 & 0 & 1 & 0 & -1 & 1 & 1 & -1 & -1
	\end{bmatrix}, \quad
	w_i = \begin{cases}
		4/9, & i=0, \\
		1/9, & i=1\dots4, \\
		1/36, & i=5\dots8,
	\end{cases}
\end{equation}
with $c = \Delta x / \Delta t$ being the lattice speed and $c_s = c/\sqrt{3}$ the speed of sound. For each velocity index $i$, the index $\bar{i}$ denotes the opposite direction satisfying $\boldsymbol{e}_{\bar{i}} = -\boldsymbol{e}_i$ and $w_{\bar{i}} = w_i$.

The equilibrium distribution function $f_i^{\mathrm{eq}}$ is expanded up to the third-order Hermite polynomials to minimize discretization errors:
\begin{equation}\label{eq:feq}
	f_i^{\mathrm{eq}} = w_i \rho \left\{ 1 + \frac{\mathcal{H}_{i,\alpha}^{(1)}}{c_s^2} u_\alpha + \frac{\mathcal{H}_{i,\alpha\beta}^{(2)}}{2c_s^4} u_\alpha u_\beta + \frac{\mathcal{H}_{i,\alpha\beta\gamma}^{(3)}}{6c_s^6} u_\alpha u_\beta u_\gamma \right\},
\end{equation}
where the Hermite polynomial tensors are defined as:
\begin{subequations}\label{eq:Hermite}
	\begin{align}
		\mathcal{H}_i^{(0)} &= 1, \\
		\mathcal{H}_{i,\alpha}^{(1)} &= e_{i\alpha}, \\
		\mathcal{H}_{i,\alpha\beta}^{(2)} &= e_{i\alpha} e_{i\beta} - c_s^2 \delta_{\alpha\beta}, \\
		\mathcal{H}_{i,\alpha\beta\gamma}^{(3)} &= e_{i\alpha} e_{i\beta} e_{i\gamma} - c_s^2 (e_{i\alpha}\delta_{\beta\gamma} + e_{i\beta}\delta_{\gamma\alpha} + e_{i\gamma}\delta_{\alpha\beta}).
	\end{align}
\end{subequations}
These polynomials satisfy the parity relation $\mathcal{H}_{\bar{i}}^{(n)} = (-1)^n \mathcal{H}_{i}^{(n)}$, i.e., even-order polynomials are symmetric and odd-order polynomials are antisymmetric under velocity inversion. This property will play a central role in the subsequent analysis.

The non-equilibrium distribution function is defined as $f_i^{\mathrm{neq}} = f_i - f_i^{\mathrm{eq}}$. Its projection onto the Hermite basis yields the non-equilibrium moments:
\begin{equation}\label{eq:projection}
	\mathcal{A}_{\mathbf{k}}^{\mathrm{neq}} = \sum_i \mathcal{H}_{i,\mathbf{k}}^{(n)} \, f_i^{\mathrm{neq}}, \quad \mathbf{k} \in \{\alpha,\; \alpha\beta,\; \alpha\beta\gamma\}.
\end{equation}

Regarding the treatment of external body forces, two schemes are considered in this work:

\paragraph{Scheme I: The Guo forcing scheme.}
The widely used scheme proposed by Guo et al.\ \cite{guo2002discrete} ensures second-order accuracy and eliminates discrete lattice effects:
\begin{equation}\label{eq:source_guo}
	S_i^{\mathrm{I}} = w_i \left[ \left(1 - \frac{1}{2\tau_{s,1}}\right) \frac{(\boldsymbol{e}_i \cdot \boldsymbol{u})(\boldsymbol{e}_i \cdot \boldsymbol{F})}{c_s^4} + \left(1 - \frac{1}{2\tau_{s,2}}\right) \frac{(\boldsymbol{e}_i - \boldsymbol{u}) \cdot \boldsymbol{F}}{c_s^2} \right].
\end{equation}

\paragraph{Scheme II: The simplified projection scheme.}
In the original TRT-RLB implementation \cite{WOS:001361481500001}, a simplified strategy is employed, consisting of a first-order force projection supplemented by a compensatory term $G_i$ that corrects cubic velocity errors:
\begin{equation}\label{eq:source_simple}
	S_i^{\mathrm{II}} = \left(1 - \frac{1}{2\tau_{s,1}}\right) w_i \frac{\boldsymbol{e}_i \cdot \boldsymbol{F}}{c_s^2} + G_i,
\end{equation}
where
\begin{equation}\label{eq:Gi}
	G_i = -w_i \frac{\mathcal{H}_{i,\alpha\beta}^{(2)}}{6c_s^6} \left(1 - \frac{1}{2\tau_{s,1}}\right) \p_\gamma \Phi_{\alpha\beta\gamma},
\end{equation}
with $\Phi_{\alpha\beta\gamma}$ defined as $\Phi_{xxx} = \rho u_x^3$, $\Phi_{yyy} = \rho u_y^3$, and zero otherwise.

In both schemes, the macroscopic quantities are computed with a half-step force correction to achieve second-order temporal accuracy \cite{he1998discrete}:
\begin{equation}\label{eq:macro}
	\rho = \sum_i f_i, \qquad \rho \boldsymbol{u} = \sum_i \boldsymbol{e}_i f_i + \frac{\Delta t}{2} \boldsymbol{F}.
\end{equation}

\subsection{The TRT-RLB Model: Regularization via Hermite Projection}
\label{sec:TRT-RLB}

The TRT-RLB model, recently proposed by Yu et al.\ \cite{WOS:001361481500001}, integrates the enhanced stability of the regularized collision operator with the boundary accuracy inherent to the two-relaxation-time scheme. Its evolution equation reads:
\begin{align}
	f_i(\boldsymbol{x} + \boldsymbol{e}_i \Delta t,\, t + \Delta t) &= f_i^{\mathrm{eq}}(\boldsymbol{x}, t) + \left(1 - \frac{1}{\tau_{s,1}}\right) \mathcal{R}_i^{(2)} + \left(1 - \frac{1}{\tau_{s,2}}\right) \mathcal{R}_i^{(3)} + S_i \Delta t,
	\label{eq:TRT-RLB}
\end{align}
where $\tau_{s,1}$ governs the relaxation of both the first- and second-order non-equilibrium moments, which are directly linked to the physical transport coefficients, while $\tau_{s,2}$ is a free parameter that exclusively controls the third-order non-equilibrium moments. The source term $S_i$ is given by either Scheme~I \eqref{eq:source_guo} or Scheme~II \eqref{eq:source_simple}. The physical viscosity is related to the symmetric relaxation time by $\mu = \rho c_s^2 (\tau_{s,1} - \Delta t/2)$, while $\tau_{s,2}$ is a free parameter that controls the relaxation of the odd non-equilibrium modes.

The key feature of this model lies in the regularization terms $\mathcal{R}_i^{(2)}$ and $\mathcal{R}_i^{(3)}$. Rather than using the raw non-equilibrium distribution in the collision step, the TRT-RLB model reconstructs the non-equilibrium part exclusively from its Hermite projections:
\begin{equation}\label{eq:R2R3_hermite}
	\mathcal{R}_i^{(2)} = w_i \frac{\mathcal{H}_{i,\alpha\beta}^{(2)}}{2c_s^4} \mathcal{A}_{\alpha\beta}^{\mathrm{neq}}, \qquad
	\mathcal{R}_i^{(3)} = w_i \frac{\mathcal{H}_{i,\alpha\beta\gamma}^{(3)}}{6c_s^6} \mathcal{A}_{\alpha\beta\gamma}^{\mathrm{neq}}.
\end{equation}
This procedure is, in essence, a projection of the non-equilibrium distribution onto the subspace spanned by the hydrodynamically relevant Hermite polynomials. Any component of $f_i^{\mathrm{neq}}$ that lies outside this subspace---that is, any non-hydrodynamic mode---is discarded during the reconstruction. This filtering is what endows the TRT-RLB model with its superior stability compared to the standard TRT.

However, the explicit evaluation of Eq.~\eqref{eq:R2R3_hermite} is computationally expensive. The second-order term requires computing $D(D+1)/2$ independent components of the stress tensor $\mathcal{A}_{\alpha\beta}^{\mathrm{neq}}$ via a summation over all discrete velocities, followed by a tensor contraction to reconstruct the distribution. The third-order term involves an analogous procedure for the energy-flux tensor $\mathcal{A}_{\alpha\beta\gamma}^{\mathrm{neq}}$. This ``projection--reconstruction'' cycle constitutes the dominant computational overhead in the original TRT-RLB algorithm. A natural question thus arises: is it possible to achieve the same filtering effect through a simpler algebraic operation, without explicitly computing the Hermite projections? The answer, as we show in the next subsection, is affirmative.

\subsection{Orthogonal Decomposition and Ghost-Mode Identification}
\label{sec:decomposition}

The central theoretical contribution of this work is to show that the Hermite-based regularization described above can be recast as a ghost-mode filtering operation within the framework of orthogonal subspace decomposition. In this subsection, we establish the necessary mathematical structure.

\subsubsection{Parity Decomposition of the Velocity Space}

Any distribution function $f_i$ defined on a lattice with inversion symmetry ($\boldsymbol{e}_{\bar{i}} = -\boldsymbol{e}_i$, $w_{\bar{i}} = w_i$) can be uniquely decomposed into symmetric and antisymmetric parts \cite{kruger2017lattice}:
\begin{equation}\label{eq:parity}
	f_i^{+} = \frac{1}{2}(f_i + f_{\bar{i}}), \qquad
	f_i^{-} = \frac{1}{2}(f_i - f_{\bar{i}}),
\end{equation}
satisfying $f_i^{+} = f_{\bar{i}}^{+}$ and $f_i^{-} = -f_{\bar{i}}^{-}$. The set of all symmetric distributions spans the \emph{symmetric subspace} $\mathcal{S}$, and the set of all antisymmetric distributions spans the \emph{antisymmetric subspace} $\mathcal{A}$. These two subspaces are mutually orthogonal and together span the full $Q$-dimensional velocity space: $\mathbb{R}^Q = \mathcal{S} \oplus \mathcal{A}$.

For a lattice with $Q$ velocities, of which $Q_0$ are self-inverting ($\boldsymbol{e}_i = \boldsymbol{0}$) and $Q_p = (Q - Q_0)/2$ form inversion pairs, the subspace dimensions are:
\begin{equation}\label{eq:dim}
	\dim(\mathcal{S}) = Q_0 + Q_p, \qquad \dim(\mathcal{A}) = Q_p.
\end{equation}
For the D2Q9 lattice ($Q = 9$, $Q_0 = 1$, $Q_p = 4$), this gives $\dim(\mathcal{S}) = 5$ and $\dim(\mathcal{A}) = 4$.

The parity relation of the Hermite polynomials, $\mathcal{H}_{\bar{i}}^{(n)} = (-1)^n \mathcal{H}_{i}^{(n)}$, implies that even-order Hermite polynomials ($n = 0, 2$) reside in $\mathcal{S}$, while odd-order polynomials ($n = 1, 3$) reside in $\mathcal{A}$. The non-equilibrium distribution inherits this decomposition:
\begin{equation}\label{eq:fneq_parity}
	f_i^{\mathrm{neq},+} = \frac{1}{2}(f_i^{\mathrm{neq}} + f_{\bar{i}}^{\mathrm{neq}}), \qquad
	f_i^{\mathrm{neq},-} = \frac{1}{2}(f_i^{\mathrm{neq}} - f_{\bar{i}}^{\mathrm{neq}}).
\end{equation}

\subsubsection{Ghost Modes in the Symmetric Subspace}

We now examine the internal structure of the symmetric subspace $\mathcal{S}$. The even-order Hermite polynomials residing in $\mathcal{S}$ consist of the zeroth-order mode ($\mathcal{H}^{(0)} = 1$, spanning a one-dimensional subspace $\mathcal{S}^{(0)}$) and the second-order stress modes ($\mathcal{H}^{(2)}_{\alpha\beta}$, spanning a subspace $\mathcal{S}^{(2)}$ of dimension $M_2 = D(D+1)/2$). However, for most standard lattices, these hydrodynamic modes do not exhaust the full symmetric subspace. The remaining dimensions correspond to higher-order even modes that have no macroscopic physical counterpart---these are the \emph{ghost modes}.

Formally, the symmetric subspace admits the orthogonal decomposition:
\begin{equation}\label{eq:S_decomp}
	\mathcal{S} = \mathcal{S}^{(0)} \oplus \mathcal{S}^{(2)} \oplus \mathcal{S}^{(G)},
\end{equation}
where $\mathcal{S}^{(G)}$ denotes the ghost-mode subspace, with dimension
\begin{equation}\label{eq:NG}
	N_G = \dim(\mathcal{S}) - 1 - \frac{D(D+1)}{2}.
\end{equation}
For the D2Q9 lattice, $N_G = 5 - 1 - 3 = 1$, meaning that a single ghost mode exists. Its explicit form is:
\begin{equation}\label{eq:ghost_D2Q9}
	\phi_i^{(G)} = (e_{ix}^2 - c_s^2)(e_{iy}^2 - c_s^2),
\end{equation}
with the weighted norm $N_G^{(\phi)} = \sum_i w_i [\phi_i^{(G)}]^2 = 4c_s^8$. For the D3Q19 lattice, $N_G = 10 - 1 - 6 = 3$, and three orthogonal ghost modes can be identified as $(e_{i\alpha}^2 - c_s^2)(e_{i\beta}^2 - c_s^2)$ for each pair $\alpha \neq \beta$. For the D3Q27 lattice, $N_G = 7$, which includes both fourth-order and sixth-order ghost modes. A complete enumeration of the ghost modes for each lattice is provided in \ref{app:ghost_modes}.

\begin{table}[h]
	\centering
	\caption{Structure of the symmetric subspace for common lattice models.}
	\label{tab:sym_structure}
	\begin{tabular}{l c c c c c}
		\toprule
		Lattice & $\dim(\mathcal{S})$ & $\dim(\mathcal{S}^{(0)})$ & $\dim(\mathcal{S}^{(2)})$ & $N_G$ & Ghost mode order \\
		\midrule
		D2Q9  & 5  & 1 & 3 & 1 & 4th \\
		D3Q19 & 10 & 1 & 6 & 3 & 4th \\
		D3Q27 & 14 & 1 & 6 & 7 & 4th and 6th \\
		\bottomrule
	\end{tabular}
\end{table}

The projection operator onto the ghost subspace is defined as:
\begin{equation}\label{eq:ghost_proj}
	(\hat{P}^{(G)} f^{\mathrm{neq}})_i = \sum_{k=1}^{N_G} \frac{w_i \, \phi_{i,k}^{(G)}}{N_{G,k}^{(\phi)}} \sum_{j} \phi_{j,k}^{(G)} f_j^{\mathrm{neq}},
\end{equation}
where the summation is over all $N_G$ orthogonal ghost basis functions $\phi_{i,k}^{(G)}$.

\subsubsection{The Null-Space Property of the Antisymmetric Subspace}
\label{sec:null_space}

The antisymmetric subspace $\mathcal{A}$ is spanned by odd-order Hermite polynomials: the first-order momentum modes ($\mathcal{H}^{(1)}_\alpha$, spanning $\mathcal{A}^{(1)}$ of dimension $D$) and the third-order energy-flux modes ($\mathcal{H}^{(3)}_{\alpha\beta\gamma}$, spanning $\mathcal{A}^{(3)}$ of dimension $N_3$). Unlike the symmetric case, for the D2Q9 and D3Q19 lattices, these modes \emph{completely} exhaust the antisymmetric subspace:
\begin{equation}\label{eq:A_decomp}
	\mathcal{A} = \mathcal{A}^{(1)} \oplus \mathcal{A}^{(3)}, \qquad \text{with} \quad \dim(\mathcal{A}) = D + N_3.
\end{equation}
That is, no antisymmetric ghost modes exist. This is a consequence of the lattice aliasing property: for standard lattices satisfying $|e_{i\alpha}| \in \{0, c\}$, the identity $e_{i\alpha}^3 = c^2 e_{i\alpha}$ implies that purely cubic Hermite polynomials ($\mathcal{H}^{(3)}_{\alpha\alpha\alpha}$) vanish identically, and no independent fifth- or higher-order antisymmetric modes can be formed from the available velocity components.

Table~\ref{tab:antisym_structure} summarizes the antisymmetric subspace structure. Notably, for the D2Q9 lattice, $D + N_3 = 2 + 2 = 4 = \dim(\mathcal{A})$, and for the D3Q19 lattice, $D + N_3 = 3 + 6 = 9 = \dim(\mathcal{A})$. The completeness condition \eqref{eq:A_decomp} holds exactly. For the D3Q27 lattice, however, $D + N_3 = 3 + 7 = 10 < 13 = \dim(\mathcal{A})$, leaving a deficit of 3 dimensions that correspond to fifth-order antisymmetric modes (e.g., $e_{ix}(e_{iy}^2 - c_s^2)(e_{iz}^2 - c_s^2)$ and its cyclic permutations).

\begin{table}[h]
	\centering
	\caption{Structure of the antisymmetric subspace for common lattice models. The completeness condition $\dim(\mathcal{A}) = D + N_3$ determines whether the third-order equivalence is exact.}
	\label{tab:antisym_structure}
	\begin{tabular}{l c c c c l}
		\toprule
		Lattice & $\dim(\mathcal{A})$ & $D$ & $N_3$ & $D + N_3$ & Completeness \\
		\midrule
		D2Q9  & 4  & 2 & 2 & 4  & Exact \\
		D3Q19 & 9  & 3 & 6 & 9  & Exact \\
		D3Q27 & 13 & 3 & 7 & 10 & Deficit of 3 (5th-order modes) \\
		\bottomrule
	\end{tabular}
\end{table}

\begin{remark}
	The absence of antisymmetric ghost modes for D2Q9 and D3Q19 is a structural property of these lattices, not an approximation. It guarantees that any antisymmetric distribution can be exactly represented as a linear combination of first- and third-order Hermite modes. This property is the key to the exact equivalence established in the next subsection.
\end{remark}

\subsection{Equivalence Theorem: From Projection to Purification}
\label{sec:equivalence}

With the orthogonal decomposition established, we now prove the central result: the Hermite projection employed in the TRT-RLB model is mathematically equivalent to the removal of ghost modes from the raw non-equilibrium distribution.

\subsubsection{Second-Order Term: Stress Projection vs.\ Ghost Subtraction}

\begin{theorem}[Second-order equivalence]\label{thm:2nd}
	Let the symmetric subspace admit the orthogonal decomposition $\mathcal{S} = \mathcal{S}^{(0)} \oplus \mathcal{S}^{(2)} \oplus \mathcal{S}^{(G)}$, and let $f_i^{\mathrm{neq}}$ satisfy mass conservation $\sum_i f_i^{\mathrm{neq}} = 0$. Then the second-order Hermite projection (Method~A) is identical to the symmetric non-equilibrium part minus the ghost-mode contribution (Method~B):
	\begin{equation}\label{eq:equiv_2nd}
		\underbrace{w_i \frac{\mathcal{H}_{i,\alpha\beta}^{(2)}}{2c_s^4} \mathcal{A}_{\alpha\beta}^{\mathrm{neq}}}_{\text{Method A: Hermite projection}} \;=\; \underbrace{f_i^{\mathrm{neq},+} - (\hat{P}^{(G)} f^{\mathrm{neq}})_i}_{\text{Method B: ghost subtraction}}.
	\end{equation}
\end{theorem}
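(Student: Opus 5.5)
The plan is to read both sides of \eqref{eq:equiv_2nd} as weighted orthogonal projections of the same vector $f^{\mathrm{neq},+}$, one onto $\mathcal{S}^{(2)}$ and one onto the complement of $\mathcal{S}^{(G)}$ inside $\mathcal{S}$. The natural setting is the weighted inner product $\langle g,h\rangle_w=\sum_i g_i h_i/w_i$ on distribution space. Equivalently, if we write $f_i=w_i\psi_i$, we can use $\sum_i w_i\psi_i\chi_i$ on the "moment" functions. In this setting the operator $g\mapsto w_i\phi_i\sum_j\phi_j g_j/\sum_j w_j\phi_j^2$ is the orthogonal projection onto $\mathrm{span}\{w\phi\}$. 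This shows that $\hat P^{(G)}$ in \eqref{eq:ghost_proj} is exactly the orthogonal projection onto $w\,\mathcal{S}^{(G)}$, provided the ghost basis functions are mutually $w$-orthogonal, as stated.

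Step 1: Show that the left-hand side is the orthogonal projection of $f^{\mathrm{neq}}$ onto $w\,\mathcal{S}^{(2)}$. For the lattices considered, the second-order Hermite tensors are $w$-orthogonal to $1$, to the first- and third-order tensors, and to the ghost modes. They also satisfy $\sum_i w_i\mathcal{H}^{(2)}_{i,\alpha\beta}\mathcal{H}^{(2)}_{i,\gamma\delta}=c_s^4(\delta_{\alpha\gamma}\delta_{\beta\delta}+\delta_{\alpha\delta}\delta_{\beta\gamma})$. Contracting this with the symmetric $\mathcal{A}^{\mathrm{neq}}_{\alpha\beta}$ shows that the factor $1/(2c_s^4)$ is the correct normalization. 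In other words, $\sum_{\alpha\beta} w_i\mathcal{H}^{(2)}_{i,\alpha\beta}\mathcal{A}^{\mathrm{neq}}_{\alpha\beta}/(2c_s^4)$ reproduces $\sum_j\mathcal{H}^{(2)}_{j,\gamma\delta}g_j$ when applied to $g=w\mathcal{H}^{(2)}_{\gamma\delta}$, and it annihilates every vector $w$-orthogonal to $\mathcal{S}^{(2)}$.

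Step 2: Pass from $f^{\mathrm{neq}}$ to $f^{\mathrm{neq},+}$. Both sides only see the symmetric part. $\mathcal{A}^{\mathrm{neq}}_{\alpha\beta}=\sum_i\mathcal{H}^{(2)}_{i,\alpha\beta}f_i^{\mathrm{neq}}$ is unchanged when $f^{\mathrm{neq}}$ is replaced by $f^{\mathrm{neq},+}$, because $\mathcal{H}^{(2)}$ is even and antisymmetric vectors sum to zero against it. The same holds for the even ghost functions in $\hat P^{(G)}$.

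Step 3: Decompose $f^{\mathrm{neq},+}=w(a_0\cdot 1+\psi^{(2)}+\psi^{(G)})$ using the orthogonal decomposition \eqref{eq:S_decomp}, transported by multiplication by $w$. The coefficient $a_0$ equals $\sum_i f_i^{\mathrm{neq},+}=\sum_i f_i^{\mathrm{neq}}=0$ by the mass-conservation hypothesis. Hence $f^{\mathrm{neq},+}=P^{(2)}f^{\mathrm{neq},+}+\hat P^{(G)}f^{\mathrm{neq},+}$. Rearranging and applying Steps 1 and 2 gives \eqref{eq:equiv_2nd}.

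The main obstacle is making the orthogonality claims precise. The decomposition \eqref{eq:S_decomp} is asserted, but the inner product in which it is orthogonal is not specified. It must be the $w$-weighted one, and the three pieces must actually be $w$-orthogonal. I would verify this directly for D2Q9 using lattice isotropy up to fourth order: $\sum_i w_i\mathcal{H}^{(2)}_{\alpha\beta}=0$, and $\sum_i w_i\phi^{(G)}=0$ and $\sum_i w_i\phi^{(G)}\mathcal{H}^{(2)}_{\alpha\beta}=0$ by direct evaluation, which also gives $N_G^{(\phi)}=4c_s^8$. For D3Q19 I would carry out the analogous check, including mutual orthogonality of the three ghost modes. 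A dimension count then confirms that the three pieces exhaust $\mathcal{S}$.
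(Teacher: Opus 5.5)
Your proposal is correct and is essentially the paper's proof: expand $f^{\mathrm{neq},+}$ in $\mathcal{S}^{(0)}\oplus\mathcal{S}^{(2)}\oplus\mathcal{S}^{(G)}$, kill the $\mathcal{S}^{(0)}$ part using $\sum_i f_i^{\mathrm{neq}}=0$, and rearrange. Your Steps~1--2 fix the $w$-weighted inner product, check that the $1/(2c_s^4)$ reconstruction is the orthogonal projector onto $\mathcal{S}^{(2)}$, and note that both sides see only $f^{\mathrm{neq},+}$; this makes explicit what the paper asserts in one line. One caution for the D3Q19 check you plan, which concerns the hypothesis rather than your argument: there the paper's modes $(e_{i\alpha}^2-c_s^2)(e_{i\beta}^2-c_s^2)$ are not $w$-orthogonal to $\mathcal{H}^{(2)}_{\gamma\gamma}$ for the third direction $\gamma$, nor to one another, so they would have to be re-orthogonalized first. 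The weighted sum against $\mathcal{H}^{(2)}_{\gamma\gamma}$ is $-c_s^6$, because D3Q19 has $\sum_i w_i e_{ix}^2e_{iy}^2e_{iz}^2=0$ rather than $c_s^6$.
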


\begin{proof}
	Denote the projection operators onto $\mathcal{S}^{(0)}$, $\mathcal{S}^{(2)}$, and $\mathcal{S}^{(G)}$ by $\hat{P}^{(0)}$, $\hat{P}^{(2)}$, and $\hat{P}^{(G)}$, respectively. The orthogonal completeness of the symmetric subspace gives:
	\begin{equation}\label{eq:completeness_S}
		f_i^{\mathrm{neq},+} = (\hat{P}^{(0)} f^{\mathrm{neq}})_i + (\hat{P}^{(2)} f^{\mathrm{neq}})_i + (\hat{P}^{(G)} f^{\mathrm{neq}})_i,
	\end{equation}
	where we have used the fact that $f_i^{\mathrm{neq},+}$ is the projection of $f_i^{\mathrm{neq}}$ onto $\mathcal{S}$, since the antisymmetric component $f_i^{\mathrm{neq},-} \in \mathcal{A}$ is orthogonal to $\mathcal{S}$.
	
	By mass conservation, $\sum_i f_i^{\mathrm{neq}} = 0$, the zeroth-order projection vanishes:
	\begin{equation}
		(\hat{P}^{(0)} f^{\mathrm{neq}})_i = \frac{w_i \cdot 1}{\sum_j w_j \cdot 1^2} \sum_j f_j^{\mathrm{neq}} = w_i \cdot 0 = 0.
	\end{equation}
	Substituting into Eq.~\eqref{eq:completeness_S} and rearranging yields:
	\begin{equation}
		(\hat{P}^{(2)} f^{\mathrm{neq}})_i = f_i^{\mathrm{neq},+} - (\hat{P}^{(G)} f^{\mathrm{neq}})_i.
	\end{equation}
	The left-hand side is precisely the Hermite stress projection $\mathcal{R}_i^{(2)}$, and the right-hand side is the ghost subtraction formula. \qed
\end{proof}

For the D2Q9 lattice, the single ghost mode \eqref{eq:ghost_D2Q9} gives the explicit formula:
\begin{equation}\label{eq:R2_D2Q9}
	\mathcal{R}_i^{(2)}\big|_{\mathrm{D2Q9}} = f_i^{\mathrm{neq},+} - \frac{w_i \, \phi_i^{(G)}}{4c_s^8} \sum_{j=0}^{8} \phi_j^{(G)} f_j^{\mathrm{neq}},
\end{equation}
which replaces the tensor projection with a single scalar subtraction.

\subsubsection{Third-Order Term: Energy-Flux Projection vs.\ Antisymmetric Component}

\begin{theorem}[Third-order equivalence]\label{thm:3rd}
	Let the antisymmetric subspace satisfy the completeness condition $\mathcal{A} = \mathcal{A}^{(1)} \oplus \mathcal{A}^{(3)}$, and let $f_i^{\mathrm{neq}}$ satisfy the momentum constraint $\sum_i e_{i\alpha} f_i^{\mathrm{neq}} = 0$ (i.e., no external force or after appropriate correction). Then the third-order Hermite projection is identical to the antisymmetric non-equilibrium component:
	\begin{equation}\label{eq:equiv_3rd}
		\underbrace{w_i \frac{\mathcal{H}_{i,\alpha\beta\gamma}^{(3)}}{6c_s^6} \mathcal{A}_{\alpha\beta\gamma}^{\mathrm{neq}}}_{\text{Method A: Hermite projection}} \;=\; \underbrace{f_i^{\mathrm{neq},-}}_{\text{Method B: antisymmetric part}}.
	\end{equation}
\end{theorem}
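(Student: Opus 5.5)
The argument mirrors the proof of Theorem~\ref{thm:2nd}, now carried out in the antisymmetric subspace $\mathcal{A}$. Denote by $\hat{P}^{(1)}$ and $\hat{P}^{(3)}$ the orthogonal projections onto $\mathcal{A}^{(1)}$ and $\mathcal{A}^{(3)}$. Orthogonality is taken with respect to the weighted inner product $\langle g,h\rangle_w=\sum_i g_i h_i/w_i$ on distributions, which is equivalent to $\sum_i w_i\,\psi_i\chi_i$ on the Hermite coefficient functions. The first step is to observe that $f_i^{\mathrm{neq},-}$ is the projection of $f^{\mathrm{neq}}$ onto $\mathcal{A}$. This holds because the parity splitting \eqref{eq:parity} is orthogonal in this inner product, using $w_{\bar i}=w_i$. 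Invoking the completeness hypothesis $\mathcal{A}=\mathcal{A}^{(1)}\oplus\mathcal{A}^{(3)}$ then gives
\begin{equation*}
f_i^{\mathrm{neq},-}=(\hat{P}^{(1)}f^{\mathrm{neq}})_i+(\hat{P}^{(3)}f^{\mathrm{neq}})_i .
\end{equation*}

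The second step removes the first-order piece. The projection onto $\mathcal{A}^{(1)}$ has the explicit form $(\hat{P}^{(1)}f^{\mathrm{neq}})_i=w_i\,\mathcal{H}^{(1)}_{i,\alpha}\,\sum_j e_{j\alpha}f_j^{\mathrm{neq}}/c_s^2$. Here I would use the lattice isotropy $\sum_i w_i e_{i\alpha}e_{i\beta}=c_s^2\delta_{\alpha\beta}$, which makes the $D$ momentum modes mutually orthogonal with norm $c_s^2$. Under the momentum constraint $\sum_i e_{i\alpha}f_i^{\mathrm{neq}}=0$ this term vanishes identically, so $f_i^{\mathrm{neq},-}=(\hat{P}^{(3)}f^{\mathrm{neq}})_i$.

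The third step, which I expect to be the main obstacle, is to identify $\hat{P}^{(3)}f^{\mathrm{neq}}$ with the tensor formula $w_i\mathcal{H}^{(3)}_{i,\alpha\beta\gamma}\mathcal{A}^{\mathrm{neq}}_{\alpha\beta\gamma}/(6c_s^6)$. Two facts are needed. First, the independent nonvanishing third-order Hermite modes must be mutually orthogonal in the weighted inner product and orthogonal to $\mathcal{H}^{(1)}$. Second, the combinatorial factor $1/6$ must reproduce each mode's inverse norm once the symmetric tensor sum is expanded over permutations. For D2Q9 I would verify this directly. The surviving modes are $\mathcal{H}^{(3)}_{xxy}$ and $\mathcal{H}^{(3)}_{xyy}$, since $\mathcal{H}^{(3)}_{xxx}=\mathcal{H}^{(3)}_{yyy}=0$ because $e^3=c^2e$. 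Each appears three times in the contraction, and $\sum_i w_i(\mathcal{H}^{(3)}_{i,xxy})^2=c_s^6$, so $3/(6c_s^6)\neq 1/c_s^6$ would signal a mismatch. I therefore expect to have to check carefully that the full contraction $\mathcal{H}^{(3)}_{i,\alpha\beta\gamma}\mathcal{A}^{\mathrm{neq}}_{\alpha\beta\gamma}$ counts the mixed components $3\times$ in both factors. This gives $3\cdot\mathcal{H}_{xxy}\mathcal{A}_{xxy}$ per mode, so the weight is $3/(6c_s^6)=1/(2c_s^6)$, and the norm should be recomputed accordingly. If it equals $2c_s^6$ the identity closes. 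I would settle this by explicit evaluation on the four antisymmetric D2Q9 directions, using $\mathcal{H}^{(3)}_{xxy}=e_y(e_x^2-c_s^2)$, which has value $\mp c_s^2$ on the axis directions and $\pm 2c_s^2$ on the diagonals. For D3Q19 the same calculation runs over the six mixed modes $\mathcal{H}^{(3)}_{\alpha\alpha\beta}$; there I would also check that the pairs $\mathcal{H}^{(3)}_{xxy}$, $\mathcal{H}^{(3)}_{yzz}$ are orthogonal and that $\mathcal{H}^{(3)}_{xyz}$ vanishes on that lattice. Once this normalization is confirmed, the left-hand side of \eqref{eq:equiv_3rd} equals $\hat{P}^{(3)}f^{\mathrm{neq}}$, and combining with the second step gives the claim.
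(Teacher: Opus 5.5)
Your first two steps are the paper's proof: completeness of $\mathcal{A}$, then $\hat{P}^{(1)}f^{\mathrm{neq}}=0$ by the momentum constraint. You are right that the substance lies in your third step; the paper's last line simply asserts $(\hat{P}^{(3)}f^{\mathrm{neq}})_i=\mathcal{R}_i^{(3)}$. On D2Q9 this step closes, but your intermediate norm is a slip. From your own entries (with $c=1$), $\sum_i w_i(\mathcal{H}^{(3)}_{i,xxy})^2=2\cdot\frac19 c_s^4+4\cdot\frac1{36}\cdot 4c_s^4=\frac23 c_s^4=2c_s^6$, which is exactly what the multiplicity factor $3/(6c_s^6)=1/(2c_s^6)$ requires. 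Together with $\mathcal{H}^{(3)}_{xxy}\perp\mathcal{H}^{(3)}_{xyy}$ (by parity) and $\mathcal{H}^{(3)}_{xxx}=\mathcal{H}^{(3)}_{yyy}=0$, Method~A is then the orthogonal projection onto $\mathcal{A}^{(3)}$, and the identity holds on D2Q9.

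The genuine gap is that the D3Q19 check you plan would fail, so your third step cannot be closed under the theorem's hypotheses. Expand $e_{iy}^2(e_{ix}^2-c_s^2)(e_{iz}^2-c_s^2)$ and use the D3Q19 moments: $\sum_i w_i e_{ix}^2e_{iy}^2e_{iz}^2=0$ (no body diagonals, instead of $c_s^6$), $\sum_i w_i e_{i\alpha}^2e_{i\beta}^2=c_s^4$ for $\alpha\neq\beta$, and $\sum_i w_i e_{iy}^2=c_s^2$. This gives $\sum_i w_i\mathcal{H}^{(3)}_{i,xxy}\mathcal{H}^{(3)}_{i,yzz}=0-c_s^6-c_s^6+c_s^6=-c_s^6\neq 0$, and the same holds for the pairs $(xxz,yyz)$ and $(xyy,xzz)$.

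The six modes are still linearly independent and $w$-orthogonal to $\mathcal{H}^{(1)}$, so completeness does hold on D3Q19. Yet Method~A is not $\hat{P}^{(3)}$: it scales $w_i(\mathcal{H}^{(3)}_{i,xxy}+\mathcal{H}^{(3)}_{i,yzz})$ by $\frac12$ and $w_i(\mathcal{H}^{(3)}_{i,xxy}-\mathcal{H}^{(3)}_{i,yzz})$ by $\frac32$. A concrete counterexample: $f_i^{\mathrm{neq}}=w_i\mathcal{H}^{(3)}_{i,xxy}$ has zero mass and momentum and equals its own antisymmetric part, but Method~A returns $w_i\mathcal{H}^{(3)}_{i,xxy}-\frac12 w_i\mathcal{H}^{(3)}_{i,yzz}$. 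So the statement is false with completeness as the only structural hypothesis. The paper's proof, and its exactness claim for D3Q19 in Remark~\ref{rmk:exact}, hide exactly this in the final equality.

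The fix is an additional hypothesis: the nonvanishing third-order Hermite modes must be mutually $w$-orthogonal, with weighted norms $2c_s^6$ (type $\alpha\alpha\beta$) and $c_s^6$ (type $xyz$). Under that hypothesis your argument is complete, and it covers D2Q9. On D3Q19 the most one can say is $f_i^{\mathrm{neq},-}=(\hat{P}^{(3)}f^{\mathrm{neq}})_i$ for the true orthogonal projection. That projection is built from the orthogonal combinations $\mathcal{H}^{(3)}_{xxy}\pm\mathcal{H}^{(3)}_{yzz}$ (norms $2c_s^6$ and $6c_s^6$) and their analogues for the other two pairs, not from the Hermite tensor formula with the uniform factor $1/(6c_s^6)$.
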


\begin{proof}
	By the completeness condition, the antisymmetric subspace is spanned by the first-order and third-order Hermite modes. Denoting the corresponding projections by $\hat{P}^{(1)}$ and $\hat{P}^{(3)}$:
	\begin{equation}\label{eq:completeness_A}
		f_i^{\mathrm{neq},-} = (\hat{P}^{(1)} f^{\mathrm{neq}})_i + (\hat{P}^{(3)} f^{\mathrm{neq}})_i.
	\end{equation}
	The first-order projection is:
	\begin{equation}
		(\hat{P}^{(1)} f^{\mathrm{neq}})_i = \sum_\alpha \frac{w_i \, e_{i\alpha}}{c_s^2} \underbrace{\sum_j e_{j\alpha} f_j^{\mathrm{neq}}}_{= \, 0} = 0,
	\end{equation}
	where the momentum constraint has been used. Substituting into Eq.~\eqref{eq:completeness_A}:
	\begin{equation}
		f_i^{\mathrm{neq},-} = (\hat{P}^{(3)} f^{\mathrm{neq}})_i = \mathcal{R}_i^{(3)}.
	\end{equation}\qed
\end{proof}

When an external body force $\boldsymbol{F}$ is present, the momentum constraint is modified to $\sum_i e_{i\alpha} f_i^{\mathrm{neq}} = -\frac{\Delta t}{2} F_\alpha$ (cf.\ Eq.~\eqref{eq:macro}), so that the first-order projection no longer vanishes. The equivalence is then restored by a simple correction:
\begin{equation}\label{eq:R3_force}
	\mathcal{R}_i^{(3)} = f_i^{\mathrm{neq},-} + \frac{\Delta t}{2} \frac{w_i \, e_{i\alpha}}{c_s^2} F_\alpha.
\end{equation}

\begin{remark}[Exactness for D2Q9 and D3Q19]\label{rmk:exact}
	The completeness condition $\dim(\mathcal{A}) = D + N_3$ holds exactly for the D2Q9 and D3Q19 lattices (cf.\ Table~\ref{tab:antisym_structure}). Therefore, Theorems~\ref{thm:2nd} and~\ref{thm:3rd} together establish that the P-TRT collision operator is \emph{mathematically identical} to the TRT-RLB collision operator for these lattices. The two models produce the same post-collision distribution, and consequently the same numerical solution, at every grid point and time step.
\end{remark}

\begin{remark}[Approximation for D3Q27]\label{rmk:D3Q27}
	For the D3Q27 lattice, three fifth-order antisymmetric modes exist outside $\mathcal{A}^{(1)} \oplus \mathcal{A}^{(3)}$. In this case, the antisymmetric component $f_i^{\mathrm{neq},-}$ contains a small residual contribution from these modes, and Eq.~\eqref{eq:equiv_3rd} holds only approximately. However, in the Chapman--Enskog framework, the non-equilibrium distribution is dominated by first- and second-order terms in the Knudsen number, making the fifth-order contribution negligibly small in practice. If higher precision is required, the residual can be explicitly removed by subtracting the fifth-order projection, as detailed in \ref{app:ghost_modes}.
\end{remark}

\subsection{The P-TRT Model}
\label{sec:PTRT}

Based on the equivalence theorems established in Section~\ref{sec:equivalence}, we now present the complete P-TRT formulation. The model replaces the expensive Hermite tensor projections with algebraically equivalent ghost-mode subtraction and antisymmetric decomposition operations, while producing the identical post-collision distribution for the D2Q9 and D3Q19 lattices.

\subsubsection{Evolution Equation}

For the D2Q9 and D3Q19 lattices, where both the second- and third-order equivalences hold exactly, the P-TRT evolution equation reads:
\begin{align}\label{eq:PTRT}
	f_i(\boldsymbol{x} + \boldsymbol{e}_i \Delta t,\, t + \Delta t) &= f_i^{\mathrm{eq}} + \left(1 - \frac{1}{\tau_{s,1}}\right) \left( f_i^{\mathrm{neq},+} - \sum_{k=1}^{N_G} \frac{w_i \, \phi_{i,k}^{(G)}}{N_{G,k}^{(\phi)}} S_{G,k} \right) \nonumber \\
	&+ \left(1 - \frac{1}{\tau_{s,2}}\right) \left( f_i^{\mathrm{neq},-} + \frac{\Delta t}{2} \frac{w_i \, e_{i\alpha}}{c_s^2} F_\alpha \right) + S_i \Delta t,
\end{align}
where $S_{G,k} = \sum_j \phi_{j,k}^{(G)} f_j^{\mathrm{neq}}$ is the ghost scalar moment for the $k$-th ghost mode. For D2Q9, only a single ghost mode ($k = 1$) is needed; for D3Q19, three ghost modes contribute.

For the D3Q27 lattice, a hybrid strategy is adopted as discussed in \ref{app:D3Q27}: the second-order term retains the original tensor projection (since $N_G = 7 > M_2 = 6$), while the third-order term uses the antisymmetric approximation:
\begin{align}\label{eq:PTRT_D3Q27}
	f_i(\boldsymbol{x} + \boldsymbol{e}_i \Delta t,\, t + \Delta t) &= f_i^{\mathrm{eq}} + \left(1 - \frac{1}{\tau_{s,1}}\right) w_i \frac{\mathcal{H}_{i,\alpha\beta}^{(2)}}{2c_s^4} \mathcal{A}_{\alpha\beta}^{\mathrm{neq}} \nonumber \\
	&+ \left(1 - \frac{1}{\tau_{s,2}}\right) \left( f_i^{\mathrm{neq},-} + \frac{\Delta t}{2} \frac{w_i \, e_{i\alpha}}{c_s^2} F_\alpha \right) + S_i \Delta t.
\end{align}

The physical interpretation of Eq.~\eqref{eq:PTRT} is transparent: the P-TRT collision operator takes the standard TRT structure---relaxing symmetric and antisymmetric components at different rates---but with the symmetric part \emph{purified} by removing the ghost-mode contamination. This single modification elevates the TRT scheme to regularization-level stability.

\subsubsection{Computational Complexity Analysis}

To quantify the efficiency gain, we compare the floating-point operations (FLOPs) required for the non-equilibrium collision terms. The counting convention is as follows: one addition/subtraction or one multiplication each counts as one operation. Pre-computable constants (weights, lattice-dependent coefficients) are excluded.

For the second-order term, the original method (tensor projection) requires computing $M_2 = D(D+1)/2$ moment components via summation over $Q$ velocities, followed by distribution reconstruction, yielding approximately $4 M_2 Q$ operations. The P-TRT ghost subtraction method requires $2Q_p$ operations for the parity decomposition, plus $4 N_G Q$ operations for the ghost moment computation and subtraction, totaling $2Q_p + 4 N_G Q$.

For the third-order term, the original method involves $N_3$ tensor moment components and costs approximately $4 N_3 Q$ operations. The P-TRT antisymmetric method requires only $2Q_p$ operations (parity decomposition), since the antisymmetric part is obtained as a by-product of the same decomposition used for the second-order term.

Table~\ref{tab:flops_detail} presents the detailed operation counts for each term, and Table~\ref{tab:flops_total} compares the total non-equilibrium collision cost across different models.

\begin{table}[h]
	\centering
	\caption{Floating-point operations per node for the second-order and third-order regularization terms. The P-TRT method selects the cheaper option for each term and each lattice.}
	\label{tab:flops_detail}
	\begin{tabular}{l cc cc cc}
		\toprule
		\multirow{2}{*}{Lattice} & \multicolumn{2}{c}{Second-order term} & \multicolumn{2}{c}{Third-order term} & \multicolumn{2}{c}{P-TRT choice} \\
		\cmidrule(lr){2-3} \cmidrule(lr){4-5} \cmidrule(lr){6-7}
		& Original & Ghost sub. & Original & Antisym. & 2nd & 3rd \\
		\midrule
		D2Q9  & 108 & \textbf{44}  & 72  & \textbf{8}  & Ghost sub. & Antisym. \\
		D3Q19 & 456 & \textbf{246} & 456 & \textbf{18} & Ghost sub. & Antisym. \\
		D3Q27 & \textbf{648} & 782 & 756 & \textbf{26} & Original   & Antisym. \\
		\bottomrule
	\end{tabular}
\end{table}

\begin{table}[h]
	\centering
	\caption{Total non-equilibrium collision cost (FLOPs per node) for different models. ``Standard TRT'' represents the baseline cost of parity decomposition and two-rate relaxation without regularization.}
	\label{tab:flops_total}
	\begin{tabular}{l c c c c}
		\toprule
		Lattice & Standard TRT & Original TRT-RLB & P-TRT & Reduction vs.\ TRT-RLB \\
		\midrule
		D2Q9  & 16  & 180  & \textbf{52}  & \textbf{71\%} \\
		D3Q19 & 36  & 912  & \textbf{264} & \textbf{71\%} \\
		D3Q27 & 52  & 1404 & \textbf{674} & \textbf{52\%} \\
		\bottomrule
	\end{tabular}
\end{table}

Several observations are noteworthy. First, the efficiency gain is most dramatic for the third-order term, where the antisymmetric decomposition reduces the cost by nearly an order of magnitude (89--97\%) across all lattices. Second, the ghost subtraction method for the second-order term is advantageous when $N_G < M_2$ (D2Q9 and D3Q19) but not when $N_G > M_2$ (D3Q27), motivating the hybrid strategy. Third, the total P-TRT cost is much closer to the standard TRT baseline than to the original TRT-RLB, confirming that the regularization overhead has been largely eliminated. Empirical validation of these theoretical predictions is presented in Section~\ref{sec3.1.2} using the double shear layer benchmark.

\subsection{Macroscopic Consistency via Chapman--Enskog Analysis}
\label{sec:CE}

In this subsection, we verify that the P-TRT model correctly recovers the Navier--Stokes equations \eqref{eq:NSE} through the Chapman--Enskog (CE) expansion. Since the equivalence theorems (Theorems~\ref{thm:2nd} and~\ref{thm:3rd}) establish that the P-TRT and TRT-RLB collision operators produce the identical post-collision distribution for D2Q9 and D3Q19 lattices, the two models necessarily yield the same CE expansion. Nevertheless, we present the full derivation here using the P-TRT formulation (Scheme~II) to make the analysis self-contained and to explicitly confirm that the free parameter $\tau_{s,2}$ does not enter the recovered macroscopic equations.

We adopt the standard multi-scale expansion with respect to a small parameter $\varepsilon$:
\begin{equation}\label{eq:ce}
	\begin{split}
		f_i = f_i^{(0)} + \varepsilon f_i^{(1)} + \varepsilon^2 f_i^{(2)}, \quad
		\partial_t = \varepsilon \partial_t^{(1)} + \varepsilon^2 \partial_t^{(2)}, \quad
		\partial_\alpha = \varepsilon \partial_\alpha^{(1)}, \\
		G_i = \varepsilon G_i^{(1)}, \quad
		F_i = \varepsilon F_i^{(1)}, \quad
		F_\alpha = \varepsilon F_\alpha^{(1)}.
	\end{split}
\end{equation}
The non-equilibrium quantities inherit this expansion:
\begin{subequations}\label{eq:ce_neq}
	\begin{align}
		\mathcal{A}_{\alpha\beta}^{\mathrm{neq}} &= \mathcal{A}_{\alpha\beta}^{\mathrm{neq},(0)} + \varepsilon \mathcal{A}_{\alpha\beta}^{\mathrm{neq},(1)} + \varepsilon^2 \mathcal{A}_{\alpha\beta}^{\mathrm{neq},(2)}, \\
		f_i^{\mathrm{neq},-} &= f_i^{\mathrm{neq},-,(0)} + \varepsilon f_i^{\mathrm{neq},-,(1)} + \varepsilon^2 f_i^{\mathrm{neq},-,(2)}.
	\end{align}
\end{subequations}
The moments of the equilibrium distribution \eqref{eq:feq} are:
\begin{subequations}\label{eq:feq_moments}
	\begin{align}
		\sum_i e_{i\alpha} f_i^{\mathrm{eq}} &= \rho u_\alpha, \\
		\sum_i e_{i\alpha} e_{i\beta} f_i^{\mathrm{eq}} &= \rho u_\alpha u_\beta + \rho c_s^2 \delta_{\alpha\beta}, \\
		\sum_i e_{i\alpha} e_{i\beta} e_{i\gamma} f_i^{\mathrm{eq}} &= \rho u_\alpha u_\beta u_\gamma + \rho c_s^2 u_\zeta \Delta_{\alpha\beta\gamma\zeta} - \frac{1}{3c_s^2} \Phi_{\alpha\beta\gamma},
	\end{align}
\end{subequations}
where $\Delta_{\alpha\beta\gamma\zeta} = \delta_{\alpha\beta}\delta_{\gamma\zeta} + \delta_{\alpha\gamma}\delta_{\beta\zeta} + \delta_{\alpha\zeta}\delta_{\beta\gamma}$. The moments of the source terms \eqref{eq:Gi}--\eqref{eq:source_simple} satisfy:
\begin{subequations}\label{eq:source_moments}
	\begin{align}
		&\sum_i G_i = 0, \quad \sum_i e_{i\alpha} G_i = 0, \quad \sum_i e_{i\alpha} e_{i\beta} G_i = -\left(1 - \frac{1}{2\tau_{s,1}}\right) \partial_\gamma \Phi_{\alpha\beta\gamma}, \\
		&\sum_i F_i = 0, \quad \sum_i e_{i\alpha} F_i = F_\alpha, \quad \sum_i e_{i\alpha} e_{i\beta} F_i = 0.
	\end{align}
\end{subequations}

\paragraph{Taylor expansion.}
Applying a second-order Taylor expansion to the P-TRT evolution equation \eqref{eq:PTRT} and substituting the multi-scale expansions \eqref{eq:ce}--\eqref{eq:ce_neq} yields the following scale-resolved equations. To streamline the notation, the ghost subtraction term in $\mathcal{R}_i^{(2)}$ is left implicit, since by Theorem~\ref{thm:2nd} it is identical to the Hermite stress projection.

At order $\varepsilon^0$:
\begin{equation}\label{eq:order0}
	f_i^{(0)} = f_i^{\mathrm{eq}} + \left(1 - \frac{1}{\tau_{s,1}}\right) w_i \frac{\mathcal{H}_{i,\alpha\beta}^{(2)}}{2c_s^4} \mathcal{A}_{\alpha\beta}^{\mathrm{neq},(0)} + \left(1 - \frac{1}{\tau_{s,2}}\right) f_i^{\mathrm{neq},-,(0)}.
\end{equation}
Taking the second and third moments of Eq.~\eqref{eq:order0} leads to $\mathcal{A}_{\alpha\beta}^{\mathrm{neq},(0)} = 0$ and $f_i^{\mathrm{neq},-,(0)} = 0$, from which:
\begin{equation}\label{eq:f0}
	f_i^{(0)} = f_i^{\mathrm{eq}}.
\end{equation}

At order $\varepsilon^1$:
\begin{align}\label{eq:order1}
	f_i^{(1)} &+ \Delta t \left(\partial_t^{(1)} + e_{i\alpha} \partial_\alpha^{(1)}\right) f_i^{(0)} = -\frac{\Delta t}{2}\left(1 - \frac{1}{\tau_{s,1}}\right) w_i \frac{\mathcal{H}_{i,\alpha}^{(1)}}{c_s^2} F_\alpha^{(1)} + \left(1 - \frac{1}{\tau_{s,1}}\right) w_i \frac{\mathcal{H}_{i,\alpha\beta}^{(2)}}{2c_s^4} \mathcal{A}_{\alpha\beta}^{\mathrm{neq},(1)} \nonumber \\
	&+ \left(1 - \frac{1}{\tau_{s,2}}\right) \left(f_i^{\mathrm{neq},-,(1)} + \frac{1}{2} F_i^{(1)} \Delta t\right) + \left(1 - \frac{1}{2\tau_{s,1}}\right) F_i^{(1)} \Delta t + G_i^{(1)} \Delta t.
\end{align}
Taking the zeroth and first moments gives the leading-order conservation laws:
\begin{subequations}\label{eq:CE1}
	\begin{gather}
		\partial_t^{(1)} \rho + \partial_\alpha^{(1)} (\rho u_\alpha) = 0, \label{eq:CE1_mass} \\
		\partial_t^{(1)} (\rho u_\alpha) + \partial_\beta^{(1)} (\rho u_\alpha u_\beta + \rho c_s^2 \delta_{\alpha\beta}) = F_\alpha^{(1)}. \label{eq:CE1_mom}
	\end{gather}
\end{subequations}
The second moment of Eq.~\eqref{eq:order1} yields the first-order non-equilibrium stress:
\begin{equation}\label{eq:A1}
	\mathcal{A}_{\alpha\beta}^{\mathrm{neq},(1)} = -\tau_{s,1} \Delta t \left\{ \partial_t^{(1)} (\rho u_\alpha u_\beta + \rho c_s^2 \delta_{\alpha\beta}) + \partial_\gamma^{(1)} (\rho u_\alpha u_\beta u_\gamma + \rho c_s^2 u_\zeta \Delta_{\alpha\beta\gamma\zeta}) - \frac{1}{2\tau_{s,1}} \partial_\gamma^{(1)} \Phi_{\alpha\beta\gamma} \right\}.
\end{equation}

At order $\varepsilon^2$:
\begin{align}\label{eq:order2}
	f_i^{(2)} &+ \Delta t \left(\partial_t^{(1)} + e_{i\alpha} \partial_\alpha^{(1)}\right) f_i^{(1)} + \Delta t \, \partial_t^{(2)} f_i^{(0)} + \frac{\Delta t^2}{2} \left(\partial_t^{(1)} + e_{i\alpha} \partial_\alpha^{(1)}\right)^2 f_i^{(0)} \nonumber \\
	&= \left(1 - \frac{1}{\tau_{s,1}}\right) w_i \frac{\mathcal{H}_{i,\alpha\beta}^{(2)}}{2c_s^4} \mathcal{A}_{\alpha\beta}^{\mathrm{neq},(2)} + \left(1 - \frac{1}{\tau_{s,2}}\right) f_i^{\mathrm{neq},-,(2)}.
\end{align}
Eliminating the $(\partial_t^{(1)} + e_{i\alpha}\partial_\alpha^{(1)})^2$ term using Eq.~\eqref{eq:order1} and taking the zeroth and first moments of the resulting equation yields:
\begin{subequations}\label{eq:CE2}
	\begin{gather}
		\partial_t^{(2)} \rho = 0, \label{eq:CE2_mass} \\
		\partial_t^{(2)} (\rho u_\alpha) + \partial_\beta^{(1)} \left(1 - \frac{1}{2\tau_{s,1}}\right) \mathcal{A}_{\alpha\beta}^{\mathrm{neq},(1)} - \frac{\Delta t}{2} \partial_\beta^{(1)} \left(1 - \frac{1}{2\tau_{s,1}}\right) \partial_\gamma^{(1)} \Phi_{\alpha\beta\gamma} = 0. \label{eq:CE2_mom}
	\end{gather}
\end{subequations}
Substituting Eq.~\eqref{eq:A1} into Eq.~\eqref{eq:CE2_mom} and simplifying using the leading-order conservation laws \eqref{eq:CE1}, we obtain:
\begin{equation}\label{eq:CE2_simplified}
	\partial_t^{(2)} (\rho u_\alpha) = \partial_\beta^{(1)} \left(\tau_{s,1} - \frac{1}{2}\right) \Delta t \left[ \rho c_s^2 (\partial_\alpha^{(1)} u_\beta + \partial_\beta^{(1)} u_\alpha) + (F_\alpha^{(1)} u_\beta + u_\alpha F_\beta^{(1)}) \right].
\end{equation}

\paragraph{Recovered macroscopic equations.}
Combining Eqs.~\eqref{eq:CE1_mass}$\times\varepsilon$ + \eqref{eq:CE2_mass}$\times\varepsilon^2$ and Eqs.~\eqref{eq:CE1_mom}$\times\varepsilon$ + \eqref{eq:CE2_simplified}$\times\varepsilon^2$, we recover:
\begin{subequations}\label{eq:recovered_NSE}
	\begin{gather}
		\partial_t \rho + \partial_\alpha (\rho u_\alpha) = 0, \\
		\partial_t (\rho u_\alpha) + \partial_\beta (\rho u_\alpha u_\beta + \rho c_s^2 \delta_{\alpha\beta}) = \partial_\beta \left[ \mu (\partial_\alpha u_\beta + \partial_\beta u_\alpha) \right] + F_\alpha + \partial_\beta \left[ \mu (F_\alpha u_\beta + u_\alpha F_\beta) / (\rho c_s^2) \right],
	\end{gather}
\end{subequations}
where the dynamic viscosity is:
\begin{equation}\label{eq:viscosity}
	\mu = \rho c_s^2 \left(\tau_{s,1} - \frac{1}{2}\right) \Delta t.
\end{equation}
The last term on the right-hand side of the momentum equation is an artifact of the simplified forcing scheme (Scheme~II). As noted by Guo et al.\ \cite{guo2002discrete}, this term does not compromise accuracy for low-Mach-number flows, and Postma et al.\ \cite{postma2020force} confirmed its suitability for two-relaxation-time models. When Scheme~I (the Guo forcing) is employed, this artifact vanishes and the standard Navier--Stokes equations are recovered exactly.

\paragraph{Role of the free parameter $\tau_{s,2}$}
A critical observation from the above analysis is that the free relaxation parameter $\tau_{s,2}$ is entirely absent from the recovered macroscopic equations \eqref{eq:recovered_NSE} and the viscosity relation \eqref{eq:viscosity}. This confirms that $\tau_{s,2}$ does not affect the physical transport coefficients. Its role is confined to the non-hydrodynamic level: it controls the damping rate of the third-order (antisymmetric) non-equilibrium modes, which are filtered out during the regularization process. The P-TRT framework makes this role transparent---$\tau_{s,2}$ governs how quickly the antisymmetric non-equilibrium component, purified of any first-order residual via the force correction \eqref{eq:R3_force}, is relaxed toward zero.

\paragraph{Implications for numerical stability.}
Although the Chapman--Enskog analysis confirms macroscopic consistency, it is inherently asymptotic and does not directly address numerical stability, which depends on the spectral properties of the fully discrete evolution operator. To bridge this gap, a linear stability analysis is presented in the next subsection, where the eigenvalue structure of the P-TRT collision operator is compared with that of the standard TRT model.

\subsection{Linear Stability Analysis}
\label{sec:LSA}

The Chapman--Enskog expansion establishes the macroscopic consistency of the P-TRT model but provides no direct information about numerical stability, which is governed by the spectral radius of the fully discrete evolution operator. In this subsection, we perform a linear stability analysis (LSA) in the moment space to rigorously characterize how the ghost-mode annihilation in the P-TRT collision operator affects the eigenvalue spectrum of the scheme, in comparison with the standard TRT model.

\subsubsection{Linearization and Fourier Analysis}

Consider a uniform rest state $f_i^{(0)} = w_i \rho_0$, $\boldsymbol{u}^{(0)} = \boldsymbol{0}$. A small perturbation is introduced:
\begin{equation}\label{eq:lsa_perturb}
	f_i(\boldsymbol{x}, t) = f_i^{(0)} + \epsilon \, \delta f_i(\boldsymbol{x}, t), \quad |\epsilon| \ll 1,
\end{equation}
where the perturbation takes the Fourier normal-mode form:
\begin{equation}\label{eq:fourier_mode}
	\delta f_i(\boldsymbol{x}, t) = \hat{f}_i \, z^n \, e^{i \boldsymbol{k} \cdot \boldsymbol{x}},
\end{equation}
with $\boldsymbol{k} = (k_x, k_y)$ the wave vector, $n$ the discrete time index, and $z \in \mathbb{C}$ the temporal amplification factor. The numerical scheme is linearly stable if and only if $|z| \le 1$ for all $\boldsymbol{k} \in [-\pi/\Delta x, \, \pi/\Delta x]^2$.

One complete time step of the lattice Boltzmann algorithm consists of collision followed by streaming. In Fourier space, the streaming step is represented by the diagonal phase-shift matrix
\begin{equation}\label{eq:streaming_matrix}
	\boldsymbol{A} = \mathrm{diag}\!\left(e^{-i \boldsymbol{k} \cdot \boldsymbol{e}_i}\right)_{i=0}^{Q-1},
\end{equation}
and the linearized collision step is represented by a collision matrix $\boldsymbol{C}$ acting on the perturbation vector $\delta\boldsymbol{f} = (\delta f_0, \ldots, \delta f_{Q-1})^T$. The full evolution is governed by
\begin{equation}\label{eq:evolution_f}
	\delta\boldsymbol{f}^{\,n+1} = \boldsymbol{L} \, \delta\boldsymbol{f}^{\,n}, \qquad \boldsymbol{L} = \boldsymbol{A} \, \boldsymbol{C},
\end{equation}
and the stability is determined by the eigenvalues of the evolution operator $\boldsymbol{L}$.

\subsubsection{Moment-Space Representation}

The analysis is most transparent in the moment space. Define the moment vector
\begin{equation}\label{eq:moment_vector}
	\delta\boldsymbol{m} = \boldsymbol{M} \, \delta\boldsymbol{f},
\end{equation}
where $\boldsymbol{M}$ is the transformation matrix whose rows correspond to the orthogonal moment basis vectors identified in Section~\ref{sec:decomposition}. For the D2Q9 lattice, the nine moments are organized into four groups:
\begin{equation}\label{eq:moment_groups}
	\delta\boldsymbol{m} = \begin{pmatrix} \delta\boldsymbol{m}_{\mathrm{cons}} \\ \delta\boldsymbol{m}_{(2)} \\ \delta\boldsymbol{m}_{(3)} \\ \delta m_G \end{pmatrix},
\end{equation}
where $\delta\boldsymbol{m}_{\mathrm{cons}} = (\delta\rho,\, \delta j_x,\, \delta j_y)^T$ contains the conserved quantities (3 modes), $\delta\boldsymbol{m}_{(2)} = (\delta\pi_{xx},\, \delta\pi_{yy},\, \delta\pi_{xy})^T$ the second-order stress modes (3 modes), $\delta\boldsymbol{m}_{(3)} = (\delta q_x,\, \delta q_y)^T$ the third-order energy-flux modes (2 modes), and $\delta m_G$ the ghost mode (1 mode). The first eight modes constitute the \emph{hydrodynamic sector} $H$, and the last mode the \emph{ghost sector} $G$.

In this basis, the linearized collision matrices of the two models take diagonal form. For the P-TRT model:
\begin{equation}\label{eq:collision_PTRT}
	\tilde{\boldsymbol{C}}_{\mathrm{P\text{-}TRT}} = \mathrm{diag}\!\left(\underbrace{1,\, 1,\, 1}_{\mathrm{cons.}},\; \underbrace{1-\frac{1}{\tau_{s,1}},\, 1-\frac{1}{\tau_{s,1}},\, 1-\frac{1}{\tau_{s,1}}}_{2\text{nd order}},\; \underbrace{1-\frac{1}{\tau_{s,2}},\, 1-\frac{1}{\tau_{s,2}}}_{3\text{rd order}},\; \underbrace{0}_{\text{ghost}}\right),
\end{equation}
and for the standard TRT model:
\begin{equation}\label{eq:collision_TRT}
	\tilde{\boldsymbol{C}}_{\mathrm{TRT}} = \mathrm{diag}\!\left(\underbrace{1,\, 1,\, 1}_{\mathrm{cons.}},\; \underbrace{1-\frac{1}{\tau_{s,1}},\, 1-\frac{1}{\tau_{s,1}},\, 1-\frac{1}{\tau_{s,1}}}_{2\text{nd order}},\; \underbrace{1-\frac{1}{\tau_{s,2}},\, 1-\frac{1}{\tau_{s,2}}}_{3\text{rd order}},\; \underbrace{1-\frac{1}{\tau_{s,1}}}_{\text{ghost}}\right).
\end{equation}
The sole difference is the treatment of the ghost mode: P-TRT sets its relaxation factor to zero (complete annihilation), whereas the standard TRT retains the factor $1 - 1/\tau_{s,1}$, identical to the stress modes. This can be equivalently understood as P-TRT applying an effective relaxation rate of $1/\tau_G = 1$ (instantaneous equilibration) to the ghost subspace, while TRT applies $1/\tau_G = 1/\tau_{s,1}$.

\subsubsection{Block Structure of the Evolution Operator}

The moment-space evolution operator is $\tilde{\boldsymbol{L}} = \tilde{\boldsymbol{A}} \, \tilde{\boldsymbol{C}}$, where $\tilde{\boldsymbol{A}} = \boldsymbol{M} \boldsymbol{A} \boldsymbol{M}^{-1}$ is the moment-space streaming matrix. The streaming matrix $\tilde{\boldsymbol{A}}$ is generally \emph{not} diagonal in moment space; it couples different moments through the advection step. Writing $\tilde{\boldsymbol{A}}$ in a $2 \times 2$ block form partitioned by the hydrodynamic ($H$, 8 modes) and ghost ($G$, 1 mode) sectors:
\begin{equation}\label{eq:streaming_block}
	\tilde{\boldsymbol{A}} = \begin{pmatrix} \tilde{\boldsymbol{A}}_{HH} & \tilde{\boldsymbol{A}}_{HG} \\ \tilde{\boldsymbol{A}}_{GH} & \tilde{A}_{GG} \end{pmatrix},
\end{equation}
the collision matrices~\eqref{eq:collision_PTRT}--\eqref{eq:collision_TRT} have the corresponding block forms:
\begin{equation}\label{eq:collision_block}
	\tilde{\boldsymbol{C}}_{\mathrm{P\text{-}TRT}} = \begin{pmatrix} \boldsymbol{C}_{HH} & \boldsymbol{0} \\ \boldsymbol{0} & 0 \end{pmatrix}, \qquad
	\tilde{\boldsymbol{C}}_{\mathrm{TRT}} = \begin{pmatrix} \boldsymbol{C}_{HH} & \boldsymbol{0} \\ \boldsymbol{0} & 1 - 1/\tau_{s,1} \end{pmatrix},
\end{equation}
where $\boldsymbol{C}_{HH}$ is the $8 \times 8$ hydrodynamic collision block, which is \emph{identical} in both models. The zero column in the P-TRT collision matrix ensures that the ghost mode contributes nothing to the post-collision state, regardless of its pre-collision value.

Multiplying out, the evolution operators become:
\begin{equation}\label{eq:L_PTRT}
	\tilde{\boldsymbol{L}}_{\mathrm{P\text{-}TRT}} = \begin{pmatrix} \tilde{\boldsymbol{A}}_{HH} \boldsymbol{C}_{HH} & \boldsymbol{0} \\ \tilde{\boldsymbol{A}}_{GH} \boldsymbol{C}_{HH} & 0 \end{pmatrix},
\end{equation}
\begin{equation}\label{eq:L_TRT}
	\tilde{\boldsymbol{L}}_{\mathrm{TRT}} = \begin{pmatrix} \tilde{\boldsymbol{A}}_{HH} \boldsymbol{C}_{HH} & (1-1/\tau_{s,1})\,\tilde{\boldsymbol{A}}_{HG} \\ \tilde{\boldsymbol{A}}_{GH} \boldsymbol{C}_{HH} & (1-1/\tau_{s,1})\,\tilde{A}_{GG} \end{pmatrix}.
\end{equation}
Two structural differences are immediately apparent. First, $\tilde{\boldsymbol{L}}_{\mathrm{P\text{-}TRT}}$ is \emph{block lower-triangular}, so its eigenvalues decouple into those of $\tilde{\boldsymbol{A}}_{HH} \boldsymbol{C}_{HH}$ and the scalar zero entry. Second, $\tilde{\boldsymbol{L}}_{\mathrm{TRT}}$ has a non-zero upper-right block $(1-1/\tau_{s,1})\,\tilde{\boldsymbol{A}}_{HG}$, which couples the ghost mode back into the hydrodynamic sector during streaming, and a non-zero ghost diagonal entry $(1-1/\tau_{s,1})\,\tilde{A}_{GG}$.

\subsubsection{Eigenvalue Structure and Stability Decoupling}

The block lower-triangular structure of Eq.~\eqref{eq:L_PTRT} leads directly to the following result.

\begin{theorem}[Ghost-mode annihilation and stability decoupling]\label{thm:stability}
	For the P-TRT model, the eigenvalues of the evolution operator $\tilde{\boldsymbol{L}}_{\mathrm{P\text{-}TRT}}$ consist of:
	\begin{enumerate}
		\item[(i)] The ghost eigenvalue $z_G = 0$, which is unconditionally stable.
		\item[(ii)] Eight hydrodynamic eigenvalues $\{z_\alpha\}_{\alpha=1}^{8}$, determined by
		\begin{equation}\label{eq:hydro_eigen}
			\det\!\left(\tilde{\boldsymbol{A}}_{HH} \boldsymbol{C}_{HH} - z\,\boldsymbol{I}_8\right) = 0.
		\end{equation}
	\end{enumerate}
	Consequently, the numerical stability of the P-TRT scheme is governed \emph{exclusively} by the hydrodynamic eigenvalues, and is completely independent of the ghost mode.
\end{theorem}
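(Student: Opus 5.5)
The plan is to read the spectrum of $\tilde{\boldsymbol{L}}_{\mathrm{P\text{-}TRT}}$ directly off its block lower-triangular form. First I will confirm that the moment-space operator has the same eigenvalues as the physical-space evolution operator $\boldsymbol{L}=\boldsymbol{A}\boldsymbol{C}$. Since $\boldsymbol{M}$ is invertible (its rows form an orthogonal basis of $\mathbb{R}^9$), we have
\begin{equation*}
\tilde{\boldsymbol{L}}=\boldsymbol{M}\boldsymbol{A}\boldsymbol{M}^{-1}\,\boldsymbol{M}\boldsymbol{C}\boldsymbol{M}^{-1}=\boldsymbol{M}\boldsymbol{L}\boldsymbol{M}^{-1},
\end{equation*}
so $\tilde{\boldsymbol{L}}$ and $\boldsymbol{L}$ are similar and share characteristic polynomials. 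Here I use that $\boldsymbol{M}\boldsymbol{C}\boldsymbol{M}^{-1}$ equals the diagonal matrix \eqref{eq:collision_PTRT}. This relies on the linearization about the rest state $f_i^{(0)}=w_i\rho_0$: the perturbations of $f^{\mathrm{eq}}$ and the projections in \eqref{eq:PTRT} act diagonally on the orthogonal moment basis. By Theorems~\ref{thm:2nd} and \ref{thm:3rd}, the P-TRT collision relaxes stress and energy-flux moments at the indicated rates and sends the ghost moment $\sum_j\phi_j^{(G)}\delta f_j$ to zero. The moment is zero because $\phi^{(G)}$ is orthogonal to every Hermite mode in the linearized $f^{\mathrm{eq}}$, and its contribution is subtracted from $f^{\mathrm{neq},+}$.

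Next I will compute the characteristic polynomial from \eqref{eq:L_PTRT}. The matrix $\tilde{\boldsymbol{L}}_{\mathrm{P\text{-}TRT}}-z\boldsymbol{I}_9$ is block lower-triangular with diagonal blocks $\tilde{\boldsymbol{A}}_{HH}\boldsymbol{C}_{HH}-z\boldsymbol{I}_8$ and $-z$. The determinant of a block-triangular matrix is the product of the determinants of its diagonal blocks, so
\begin{equation*}
\det(\tilde{\boldsymbol{L}}_{\mathrm{P\text{-}TRT}}-z\boldsymbol{I}_9)=(-z)\,\det(\tilde{\boldsymbol{A}}_{HH}\boldsymbol{C}_{HH}-z\boldsymbol{I}_8).
\end{equation*}
Its roots are $z_G=0$ together with the eight roots of \eqref{eq:hydro_eigen}, counted with algebraic multiplicity. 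This gives (i) and (ii). Unconditional stability of $z_G$ holds because $|0|\le 1$ for every $\boldsymbol{k}$ and every choice of $\tau_{s,1}$ and $\tau_{s,2}$. I will also note the left eigenvector $(\boldsymbol{0},1)$: the ghost component is annihilated after one step, since the last column of $\tilde{\boldsymbol{L}}$ is zero.

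For the final claim, the spectral radius equals $\max\{0,\max_\alpha|z_\alpha|\}=\max_\alpha|z_\alpha|$. The matrices $\tilde{\boldsymbol{A}}_{HH}$ and $\boldsymbol{C}_{HH}$ involve only hydrodynamic rows and columns of $\tilde{\boldsymbol{A}}$ and the hydrodynamic relaxation rates. In particular, neither $\tilde{\boldsymbol{A}}_{HG}$ nor $\tilde{A}_{GG}$ nor any ghost rate enters, so the stability condition $|z|\le1$ depends only on the hydrodynamic block. A subtle point is that $\tilde{\boldsymbol{A}}_{GH}$ does appear in $\tilde{\boldsymbol{L}}$, but only below the diagonal. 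It therefore affects eigenvectors and not eigenvalues, which I will state explicitly.

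The main obstacle is the first step: justifying that the linearized P-TRT collision is genuinely diagonal with a zero ghost entry in the chosen basis. This requires showing that the linearization of $f^{\mathrm{eq}}(\rho,\boldsymbol{u})$ at rest, namely $w_i(\delta\rho+e_{i\alpha}\delta j_\alpha/c_s^2)$, has no component along the stress, energy-flux or ghost modes. It also requires that the basis be $w$-orthogonal so that the projections in \eqref{eq:ghost_proj} are exact. I would verify this using the weighted orthogonality of Hermite polynomials and $\phi^{(G)}$ on D2Q9 up to the required order, with the force term set to zero in the linearization.
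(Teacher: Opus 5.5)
Your proposal is correct and follows the paper's proof: the paper reads off $\det(\tilde{\boldsymbol{L}}_{\mathrm{P\text{-}TRT}}-z\boldsymbol{I}_9)=(-z)\det(\tilde{\boldsymbol{A}}_{HH}\boldsymbol{C}_{HH}-z\boldsymbol{I}_8)$ from the block lower-triangular form and takes the diagonal collision matrix \eqref{eq:collision_PTRT} as given, so your similarity argument and your check of the zero ghost entry are extra rigor on the same route. One small slip: because the last column of $\tilde{\boldsymbol{L}}_{\mathrm{P\text{-}TRT}}$ vanishes, $(\boldsymbol{0},1)^T$ is a \emph{right} eigenvector for $z_G=0$, not a left one, since $(\boldsymbol{0},1)\,\tilde{\boldsymbol{L}}_{\mathrm{P\text{-}TRT}}=(\tilde{\boldsymbol{A}}_{GH}\boldsymbol{C}_{HH},\,0)$ is nonzero in general.
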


\begin{proof}
	The characteristic polynomial of the block lower-triangular matrix~\eqref{eq:L_PTRT} factors as:
	\begin{equation}
		\det\!\left(\tilde{\boldsymbol{L}}_{\mathrm{P\text{-}TRT}} - z\,\boldsymbol{I}_9\right) = \det\!\left(\tilde{\boldsymbol{A}}_{HH} \boldsymbol{C}_{HH} - z\,\boldsymbol{I}_8\right) \cdot (-z) = 0.
	\end{equation}
	The factor $(-z) = 0$ yields $z_G = 0$, with $|z_G| = 0 < 1$.  \qed
\end{proof}

In contrast, for the standard TRT model, the ghost eigenvalue is generally non-zero. When the coupling between the ghost and hydrodynamic sectors via streaming is weak (i.e., when the off-diagonal blocks $\tilde{\boldsymbol{A}}_{HG}$ and $\tilde{\boldsymbol{A}}_{GH}$ are small relative to the diagonal blocks), the ghost eigenvalue can be approximated as:
\begin{equation}\label{eq:ghost_eigen_TRT}
	z_G^{\mathrm{TRT}}(\boldsymbol{k}) \approx \left(1 - \frac{1}{\tau_{s,1}}\right) \tilde{A}_{GG}(\boldsymbol{k}),
\end{equation}
where $\tilde{A}_{GG}(\boldsymbol{k})$ is the ghost-sector entry of the streaming matrix in moment space. At high Reynolds numbers, $\tau_{s,1} \to 1/2$ (i.e., $1/\tau_{s,1} \to 2$), so the pre-factor $|1 - 1/\tau_{s,1}| \to 1$, and $|z_G^{\mathrm{TRT}}|$ approaches $|\tilde{A}_{GG}(\boldsymbol{k})|$. For certain high-wavenumber modes, $|\tilde{A}_{GG}|$ can exceed unity, causing the ghost eigenvalue to escape the unit circle and triggering numerical instability.

These observations lead to the following comparison.

\begin{remark}[Spectral radius comparison]\label{rmk:spectral}
	Let $\rho(\cdot)$ denote the spectral radius. For any fixed parameter set $(\tau_{s,1}, \tau_{s,2})$ and wave vector $\boldsymbol{k}$:
	\begin{equation}\label{eq:spectral_comparison}
		\rho\!\left(\tilde{\boldsymbol{L}}_{\mathrm{P\text{-}TRT}}\right) = \max_{\alpha \in H} |z_\alpha| \;\le\; \max\!\left\{\max_{\alpha \in H} |z_\alpha|,\; |z_G^{\mathrm{TRT}}|\right\} = \rho\!\left(\tilde{\boldsymbol{L}}_{\mathrm{TRT}}\right).
	\end{equation}
	Equality holds if and only if the ghost eigenvalue of TRT does not exceed the largest hydrodynamic eigenvalue. When the stability bottleneck of the standard TRT lies in the ghost mode (i.e., $|z_G^{\mathrm{TRT}}| > \max_{\alpha \in H}|z_\alpha|$), the P-TRT model is \emph{strictly} more stable, permitting a larger critical relaxation rate and hence a lower attainable viscosity.
\end{remark}

Combining Theorem~\ref{thm:stability} with the equivalence results of Theorems~\ref{thm:2nd} and~\ref{thm:3rd}, the same spectral properties apply to the original TRT-RLB model, since the two formulations produce the identical post-collision state for D2Q9 and D3Q19 lattices.

\begin{remark}[On the optimal value of $\tau_{s,2}$]\label{rmk:optimal_tau}
	With the ghost mode annihilated, the stability of the P-TRT scheme depends on the eight hydrodynamic eigenvalues $\{z_\alpha(\boldsymbol{k},\, \tau_{s,1},\, \tau_{s,2})\}$, whose moduli depend on $\tau_{s,2}$ through the collision factor $(1 - 1/\tau_{s,2})$ applied to the third-order modes. The optimal value of $\tau_{s,2}$ corresponds to a minimax condition:
	\begin{equation}\label{eq:minimax}
		\tau_{s,2}^{\mathrm{opt}} = \arg\min_{\tau_{s,2}} \; \max_{\boldsymbol{k},\, \alpha \in H} |z_\alpha(\boldsymbol{k},\, \tau_{s,1},\, \tau_{s,2})|.
	\end{equation}
	The determination of this optimum requires a parametric numerical eigenvalue computation of the $8 \times 8$ hydrodynamic evolution matrix over the full Brillouin zone $\boldsymbol{k} \in [-\pi/\Delta x,\, \pi/\Delta x]^2$, which constitutes a substantial numerical algebra problem in its own right. Such a complete parametric study is beyond the scope of the present work and is left for future investigation. Nevertheless, the empirical observation of an optimal stability plateau near $1/\tau_{s,2} \approx 1.6$ reported in Section~\ref{sec:3} is consistent with this minimax interpretation. We note that the analogous problem for the classical TRT framework---the analytical determination of the optimal ``magic parameter'' $\Lambda$---remains similarly open in the existing literature, where optimal values are typically identified through numerical experimentation \cite{TRT2008study,TRT2008two}.
\end{remark}

\section{Numerical Results and Discussion}
\label{sec:3}

In this section, we perform a series of benchmark simulations to validate the P-TRT model from three complementary perspectives: computational efficiency, numerical stability, and physical accuracy. The central expectation, derived from the mathematical equivalence established in Section~\ref{sec:equivalence}, is that the P-TRT model should produce results identical to the original TRT-RLB model while requiring significantly less computational effort. Four benchmark problems are considered: the double shear layer flow (stability and efficiency), the 2D decaying Taylor--Green vortex (accuracy and convergence), the force-driven Poiseuille flow (boundary properties), and the creeping flow past a square cylinder (robustness at ultra-low Reynolds numbers). For quantitative assessment, the relative $L_2$ error norm is defined as:
\begin{equation}
	L_2 = \sqrt{\frac{\sum_{\mathbf{x}} (q_n(\mathbf{x}, t) - q_a(\mathbf{x}, t))^2}{\sum_{\mathbf{x}} q_a^2(\mathbf{x}, t)}},
\end{equation}
where $q_a(\mathbf{x}, t)$ and $q_n(\mathbf{x}, t)$ represent the analytical and numerical solutions, respectively. For steady-state flows, the convergence criterion is:
\begin{equation}
	\max \left( \left| \frac{u_\alpha(\mathbf{x}, t + 10000\Delta t) - u_\alpha(\mathbf{x}, t)}{u_c} \right| \right) < 10^{-7}.
\end{equation}

\subsection{Double Shear Layer: Stability and Efficiency}

The double shear layer flow is selected to benchmark numerical stability at high Reynolds numbers. Its periodic boundary conditions also enable precise quantification of computational efficiency by eliminating boundary-handling overheads. The computational domain is a two-dimensional square region with periodic boundary conditions on all sides, defined as $(x, y) \in [0, L]^2$ \cite{WOS:001361481500001}. The initial velocity field consists of two longitudinal shear layers and a transverse perturbation, given by:
\begin{equation}
	u_x(\mathbf{x}, 0) =
	\begin{cases}
		u_c \tanh \left[ \kappa \left( \dfrac{y}{L} - \dfrac{1}{4} \right) \right], & \dfrac{y}{L} \le \dfrac{1}{2}, \\[6pt]
		u_c \tanh \left[ \kappa \left( \dfrac{3}{4} - \dfrac{y}{L} \right) \right], & \dfrac{y}{L} > \dfrac{1}{2},
	\end{cases}
\end{equation}
and
\begin{equation}
	u_y(\mathbf{x}, 0) = u_c \delta \sin \left[ 2\pi \left( \frac{x}{L} + \frac{1}{4} \right) \right].
\end{equation}
Here, $u_c$ represents the characteristic velocity, $\kappa$ controls the thickness of the shear layers, and $\delta$ determines the magnitude of the initial perturbation.

In our simulations, the parameters are set as follows: $\kappa = 80$, $\delta = 0.05$, $L = 1$, $c = 1$, and $\rho_0 = 1$. A uniform grid with resolution $N \times N = 128^2$ is used, giving $\Delta x = L/N$. The flow is characterized by the Reynolds number and Mach number:
\begin{equation}
	Re = \frac{\rho_0 u_c L}{\mu}, \quad Ma = \frac{u_c}{c_s},
\end{equation}
and the characteristic time is $t_c = L/u_c$.

\subsubsection{Stability at Super-High Reynolds Numbers}

To quantify the impact of the free relaxation parameter $\tau_{s,2}$ on numerical stability, we measured the maximum critical Mach number ($Ma_c$). A simulation is considered stable if it runs to $t = 2t_c$ without divergence. The precision for $Ma_c$ is set to $\Delta Ma = 0.01$. We investigated stability across a wide range of Reynolds numbers, with the inverse relaxation parameter $1/\tau_{s,2}$ varying from 0.1 to 2.0. For a comprehensive comparison, the P-TRT model is tested alongside the BGK, TRT, RLB, and original TRT-RLB models \cite{WOS:001361481500001} under identical conditions.

The stability comparison is presented in Figure~\ref{fig:ma}. The SRT-type models (BGK and RLB) are represented by horizontal lines without markers, indicating that their stability is independent of $\tau_{s,2}$. When $Re \ge 5 \times 10^4$, both the standard BGK and TRT models exhibit divergence, resulting in $Ma_c = 0$. In contrast, the P-TRT model consistently demonstrates superior numerical stability. Notably, the stability profile of the P-TRT model is virtually identical to that of the original TRT-RLB model across the entire parameter space. This numerical identity is a direct manifestation of the mathematical equivalence established in Theorems~\ref{thm:2nd} and~\ref{thm:3rd}: since the two models produce the identical post-collision distribution for the D2Q9 lattice, they must exhibit the same stability boundary.

The stability reaches its optimal state near $1/\tau_{s,2} \approx 1.6$. Even at super-high Reynolds numbers ranging from $5 \times 10^4$ to $1 \times 10^7$, the maximum $Ma_c$ remains robustly above 0.51. These observations are fully consistent with the linear stability analysis of Section~\ref{sec:LSA}. According to Theorem~\ref{thm:stability}, the P-TRT collision operator annihilates the ghost eigenvalue ($z_G = 0$), so the scheme's stability is governed exclusively by the eight hydrodynamic eigenvalues. In contrast, the standard TRT model relaxes the ghost mode at the same rate as the stress modes (both governed by $\tau_{s,1}$); as $\tau_{s,1} \to 1/2$ at high Reynolds numbers, the ghost amplification factor $|z_G^{\mathrm{TRT}}|$ can approach or exceed unity, triggering instability (cf.\ Remark~\ref{rmk:spectral}). The P-TRT model eliminates this failure mechanism entirely by removing the ghost mode from the collision step.

The parameter $\tau_{s,2}$, which controls the antisymmetric (third-order) relaxation, provides an additional degree of freedom for optimizing the spectral properties of the remaining hydrodynamic modes. The observed optimum near $1/\tau_{s,2} \approx 1.6$ reflects the minimax balance described in Remark~\ref{rmk:optimal_tau}: at this value, the maximum modulus of the hydrodynamic eigenvalues across the Brillouin zone is minimized. A precise analytical determination of this optimum would require a parametric eigenvalue study of the $8 \times 8$ hydrodynamic evolution matrix, which is beyond the scope of the present work. Nevertheless, the robust empirical plateau observed across six decades of Reynolds number (Figure~\ref{fig:ma}) confirms that this optimal value is insensitive to $\tau_{s,1}$, consistent with the theoretical prediction that the ghost-mode decoupling renders the stability landscape dependent only on the hydrodynamic sector.

\begin{figure}[H]
	\centering
	\begin{subfigure}[b]{0.3\textwidth}
		\centering
		\includegraphics[width=\linewidth]{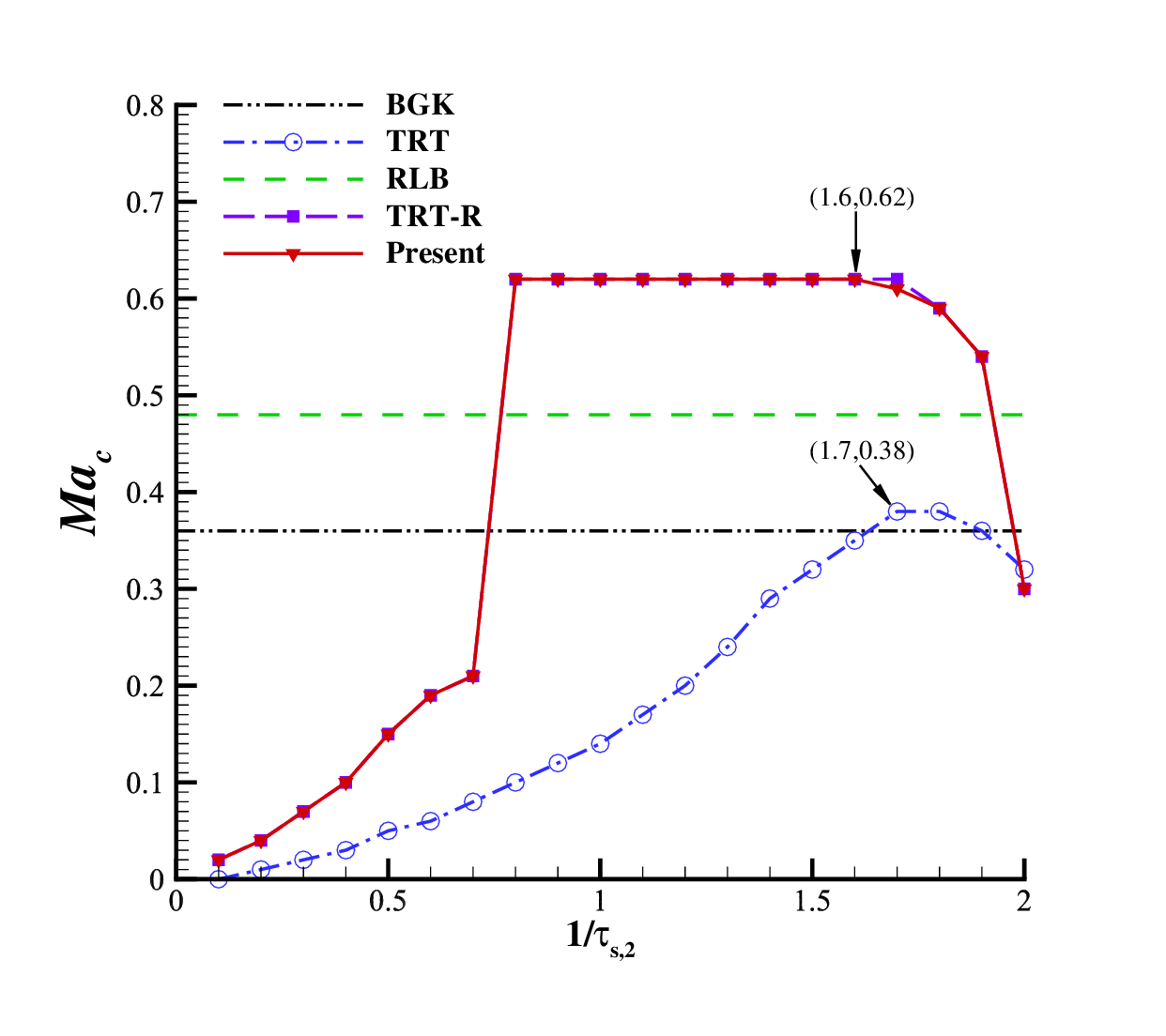}
		\caption{$Re = 5 \times 10^3$}
		\label{subfig:a}
	\end{subfigure}
	\hfill
	\begin{subfigure}[b]{0.3\textwidth}
		\centering
		\includegraphics[width=\linewidth]{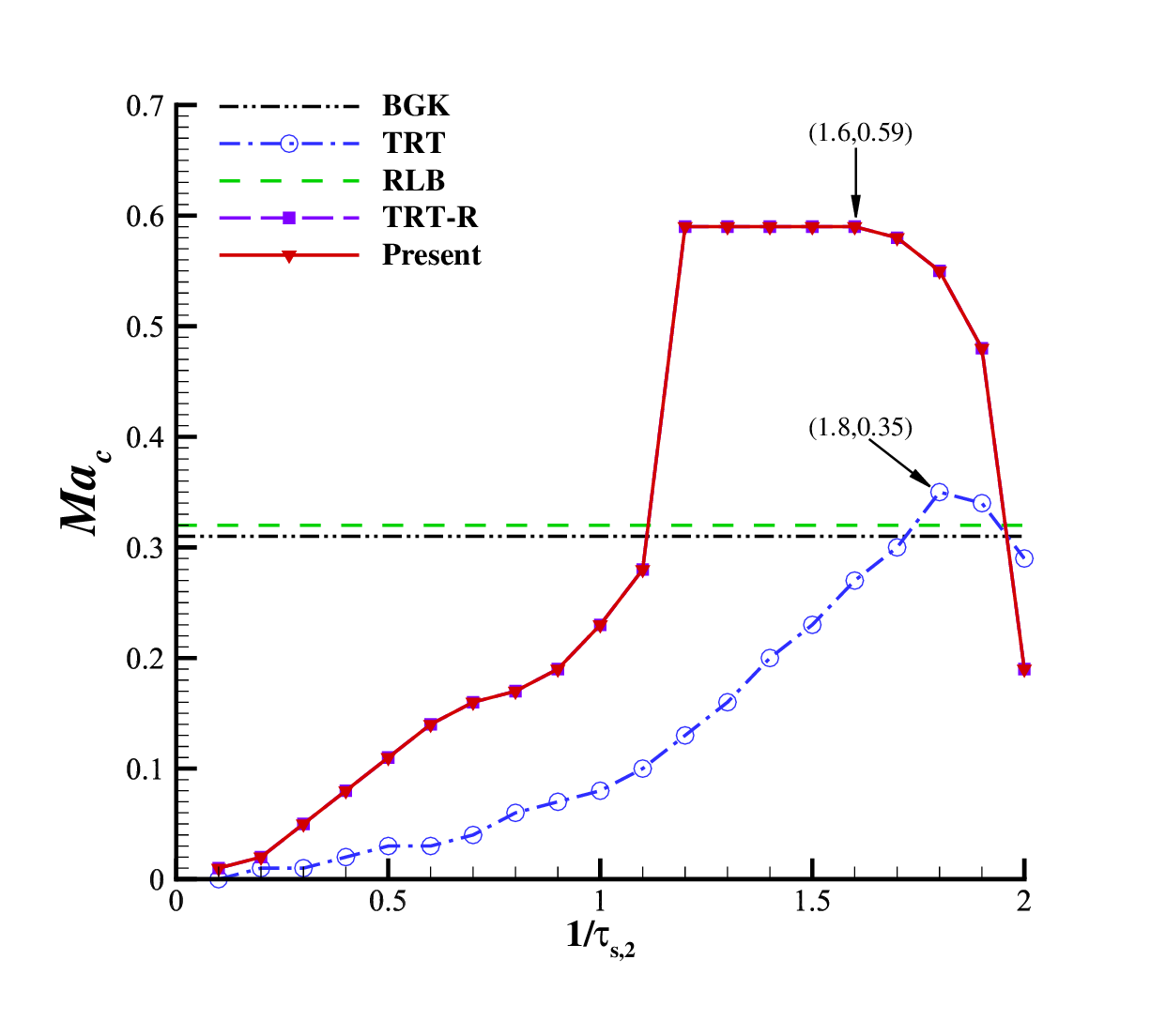}
		\caption{$Re = 1 \times 10^4$}
		\label{subfig:b}
	\end{subfigure}
	\hfill
	\begin{subfigure}[b]{0.3\textwidth}
		\centering
		\includegraphics[width=\linewidth]{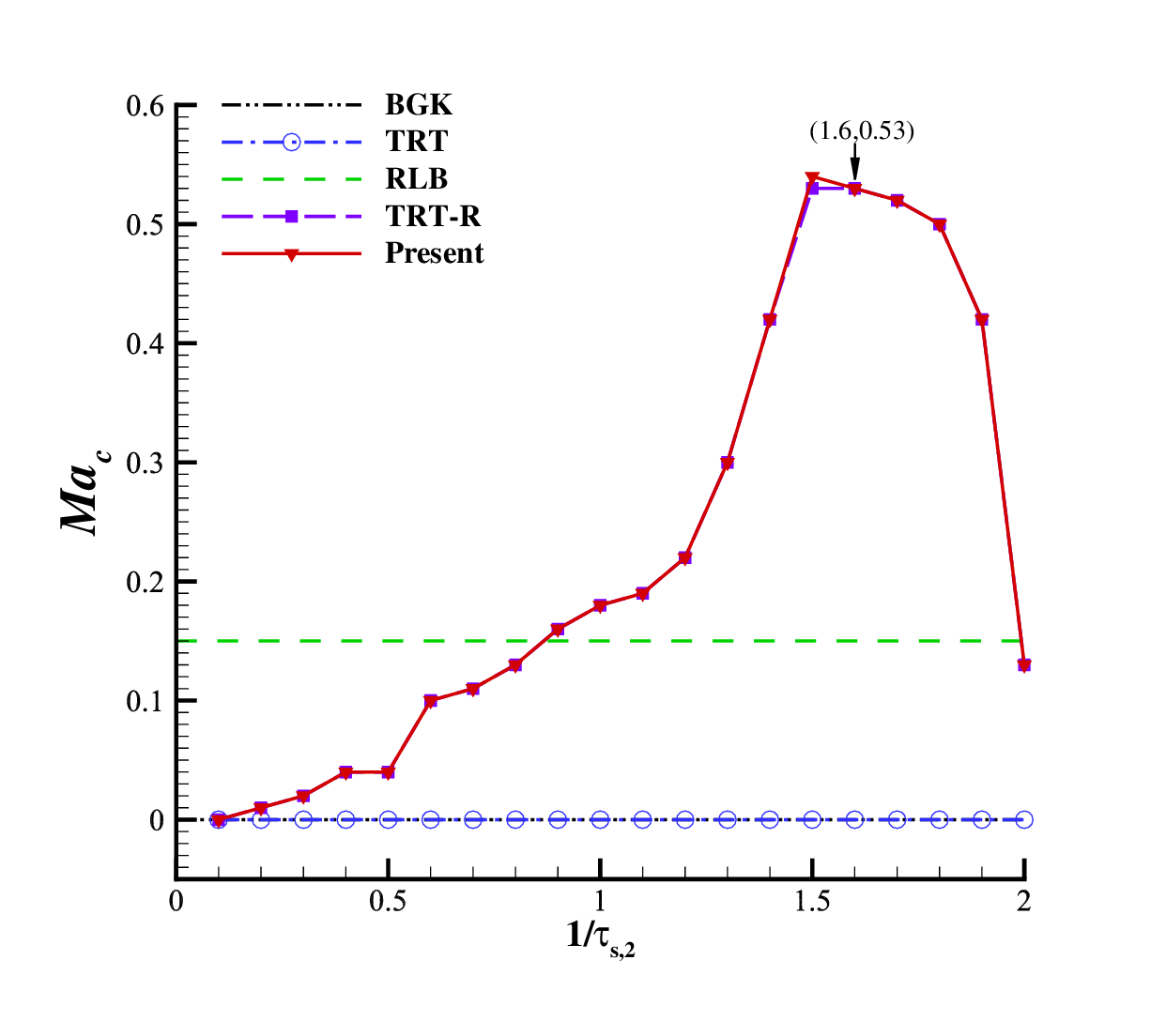}
		\caption{$Re = 5 \times 10^4$}
		\label{subfig:c}
	\end{subfigure}
	
	\vspace{0.5em}
	\begin{subfigure}[b]{0.3\textwidth}
		\centering
		\includegraphics[width=\linewidth]{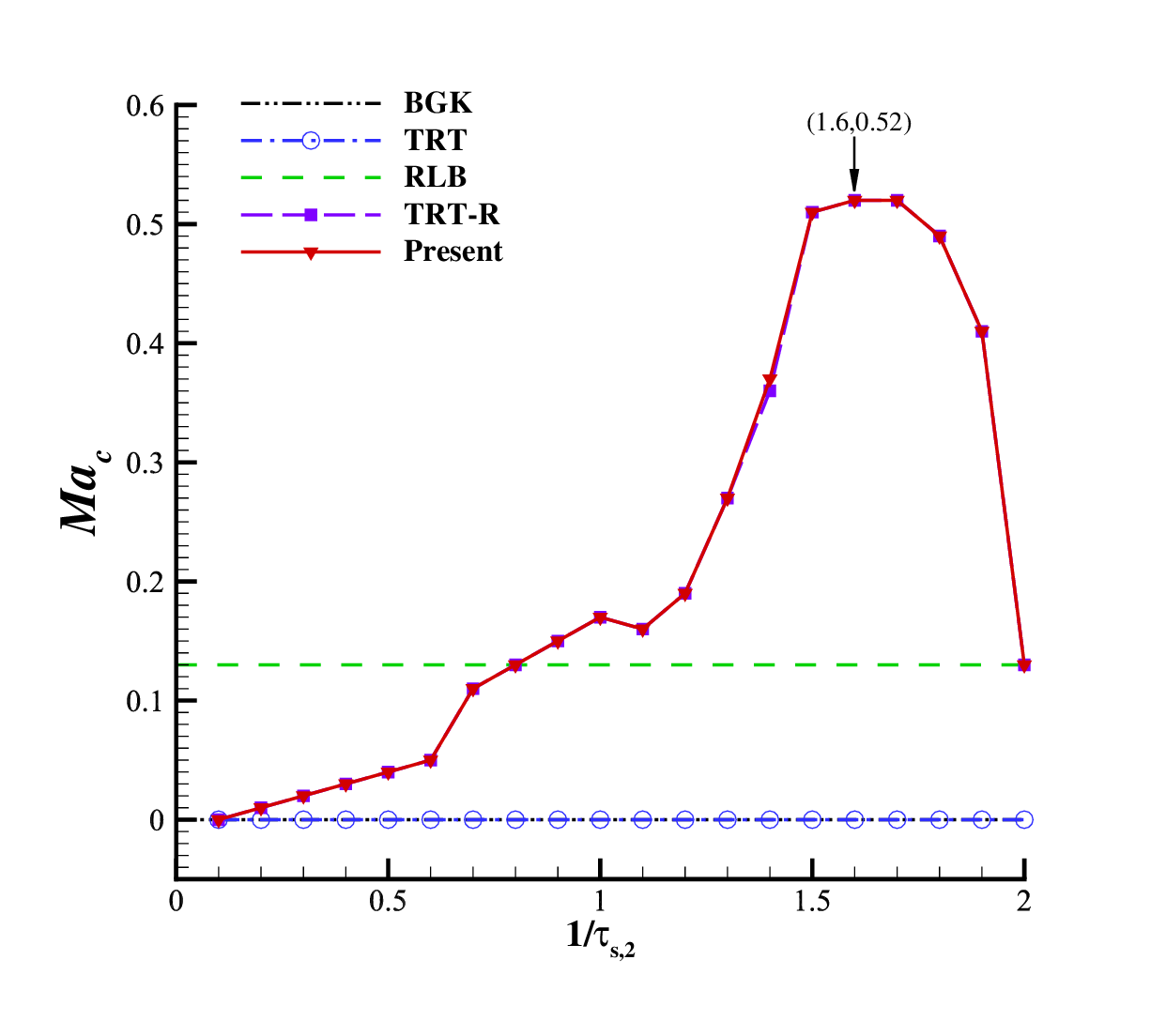}
		\caption{$Re = 1 \times 10^5$}
		\label{subfig:d}
	\end{subfigure}
	\hfill
	\begin{subfigure}[b]{0.3\textwidth}
		\centering
		\includegraphics[width=\linewidth]{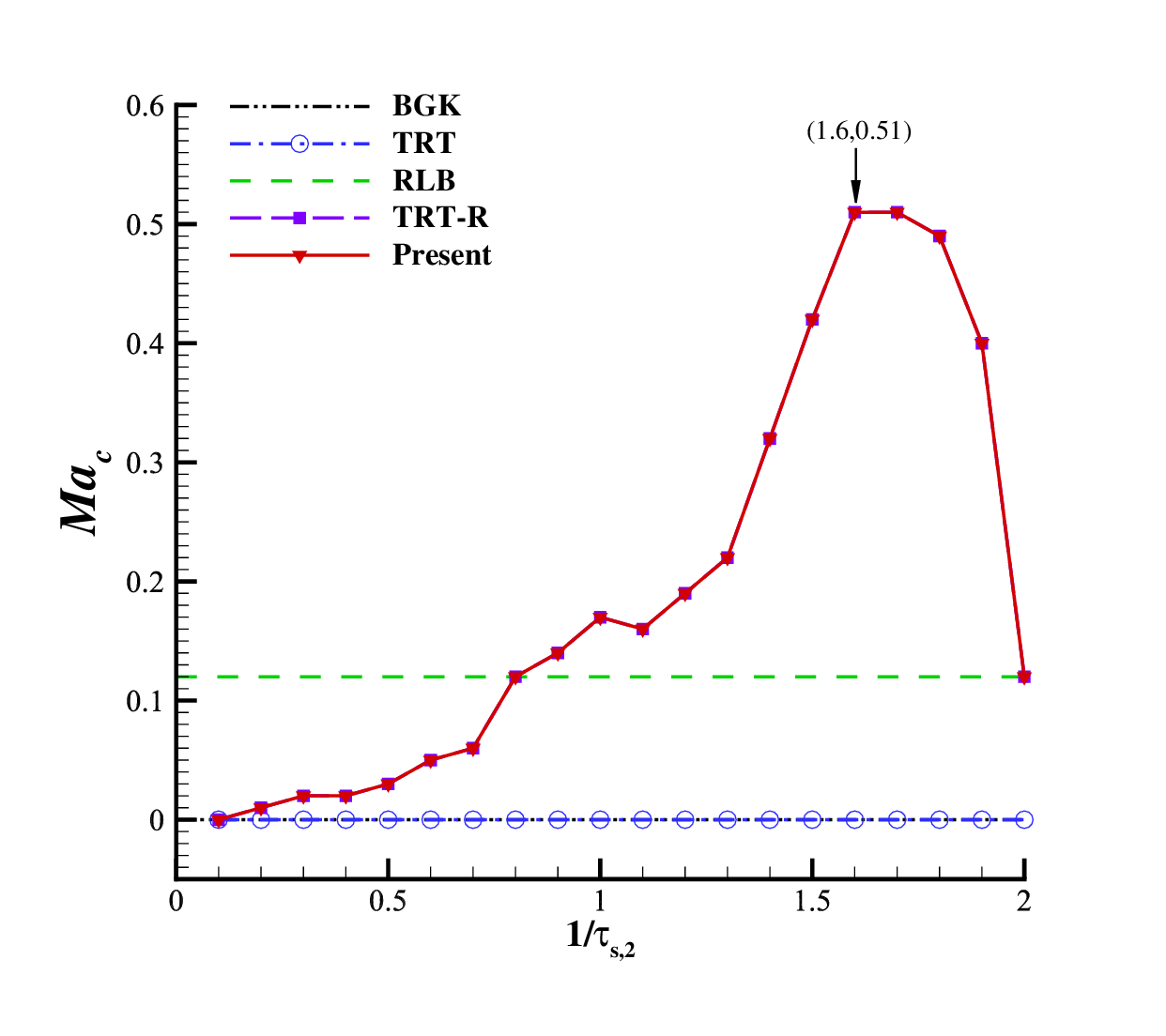}
		\caption{$Re = 1 \times 10^6$}
		\label{subfig:e}
	\end{subfigure}
	\hfill
	\begin{subfigure}[b]{0.3\textwidth}
		\centering
		\includegraphics[width=\linewidth]{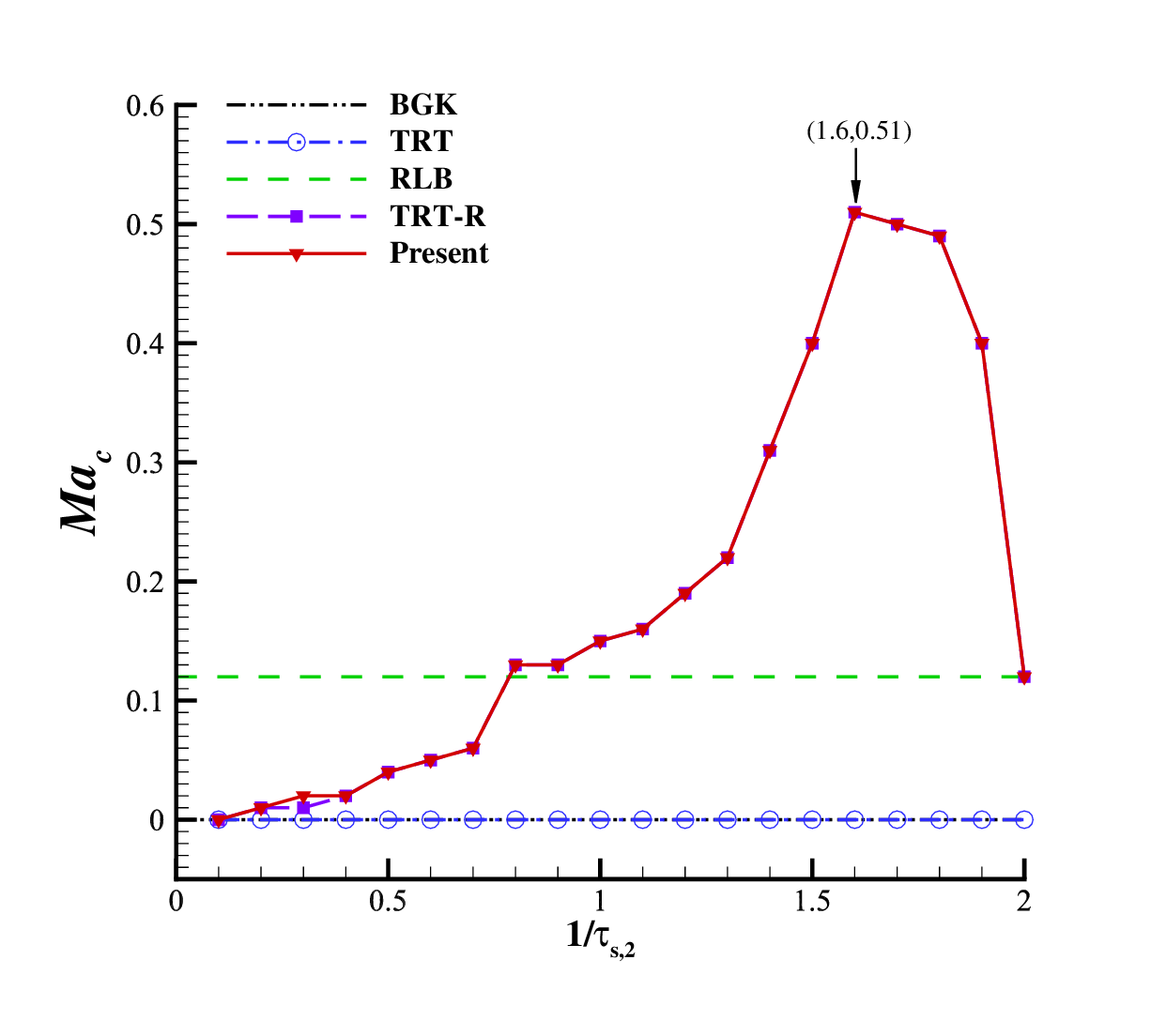}
		\caption{$Re = 1 \times 10^7$}
		\label{subfig:f}
	\end{subfigure}
	
	\caption{Maximum critical Mach number $Ma_c$ versus the inverse relaxation parameter $1/\tau_{s,2}$ in the double shear layer simulations at various Reynolds numbers. The P-TRT model (labeled ``Present'') demonstrates robust stability at super-high Reynolds numbers where standard BGK and TRT models fail to converge. The stability profile is identical to the original TRT-RLB model, confirming the mathematical equivalence.}
	\label{fig:ma}
\end{figure}

\subsubsection{Computational Efficiency Evaluation}\label{sec3.1.2}

This test case is also employed to benchmark the computational efficiency of the P-TRT model. The periodic boundaries remove boundary-related computational costs, thereby focusing the assessment on the collision operator. We conducted simulations across a range of Reynolds numbers from $10$ to $10^4$. To ensure a fair comparison, all simulations were performed on the same computing platform, and all runs were terminated only upon satisfying the strict convergence criterion. The speedup ratio is defined as:
\begin{equation}
	S_p = \frac{T_{\text{TRT-RLB}}}{T_{\text{P-TRT}}}.
\end{equation}
The comparative results are summarized in Table~\ref{tab:efficiency}. The P-TRT model consistently outperforms the original TRT-RLB framework. The relative reduction in computational cost ranges from 29\% to 35\%, yielding an average speedup ratio of approximately 1.45. This measured speedup is consistent with the theoretical FLOP reduction predicted in Table~\ref{tab:flops_total}: for the D2Q9 lattice, the non-equilibrium collision cost is reduced by 71\%, and the observed wall-clock reduction of approximately 30\% confirms that the collision step constitutes a substantial fraction of the total computational work for this periodic benchmark.

\begin{table}[h]
	\centering
	\caption{Comparison of execution time (in seconds) and computational efficiency between the original TRT-RLB and the P-TRT models for the double shear layer simulation across a wide range of Reynolds numbers.}
	\label{tab:efficiency}
	\begin{tabular}{ccccc}
		\hline
		$Re$ & TRT-RLB (s) & P-TRT (s) & Reduction (\%) & Speedup \\
		\hline
		10      & 6.93      & 4.53      & 34.63 & 1.53 \\
		100     & 43.40     & 30.75     & 29.14 & 1.41 \\
		1,000   & 328.34    & 232.14    & 29.30 & 1.41 \\
		10,000  & 2,503.53  & 1747.91   & 30.18 & 1.43 \\
		\hline
	\end{tabular}
\end{table}

\subsection{2D Decaying Taylor--Green Vortex: Accuracy and Convergence}

To rigorously validate the accuracy and grid convergence of the P-TRT model, we simulate the 2D decaying Taylor--Green vortex flow. This canonical benchmark admits an exact analytical solution, enabling precise quantification of numerical errors. The flow is defined within a square domain $(x, y) \in [0, 1]^2$ with periodic boundary conditions. The analytical solutions for the velocity field and density are given by \cite{kruger2017lattice,pearson1965computational}:
\begin{subequations}\label{eq:TG}
	\begin{align}
		u_x(\mathbf{x}, t) &= -u_0 \cos(2\pi x) \sin(2\pi y) \, e^{-t/t_d}, \\
		u_y(\mathbf{x}, t) &= u_0 \cos(2\pi y) \sin(2\pi x) \, e^{-t/t_d}, \\
		\rho(\mathbf{x}, t) &= \rho_0 - \rho_0 \frac{u_0^2}{4c_s^2} \left[\cos(4\pi x) + \cos(4\pi y)\right] e^{-t/t_d},
	\end{align}
\end{subequations}
where $t_d = 1/(8\nu\pi^2)$ is the vortex decay time.

In our simulations, the physical parameters are set to $u_0 = 0.01$, $\rho_0 = 1$, and $\nu = 0.01$, with $L = 1$. The time step is determined by the scaling relation $\Delta x^2/\Delta t = 0.01\pi^2$. The magic parameter is defined as:
\begin{equation}
	\Lambda_s = \left(\tau_{s,1} - \frac{1}{2}\right)\left(\tau_{s,2} - \frac{1}{2}\right),
\end{equation}
and is set to $\Lambda_s = 1/4$ for this test case.

To assess accuracy, a simulation was performed with grid resolution $N = 64$. Figure~\ref{fig:tg_solution} compares the computed velocity profiles $u_x(L/2, y)$ and $u_y(x, L/2)$ at various time instants with the analytical solutions. The results show excellent agreement, confirming that the ghost-mode subtraction does not introduce any additional numerical error.

\begin{figure}[H]
	\centering
	\begin{subfigure}[b]{0.48\textwidth}
		\centering
		\includegraphics[width=\linewidth]{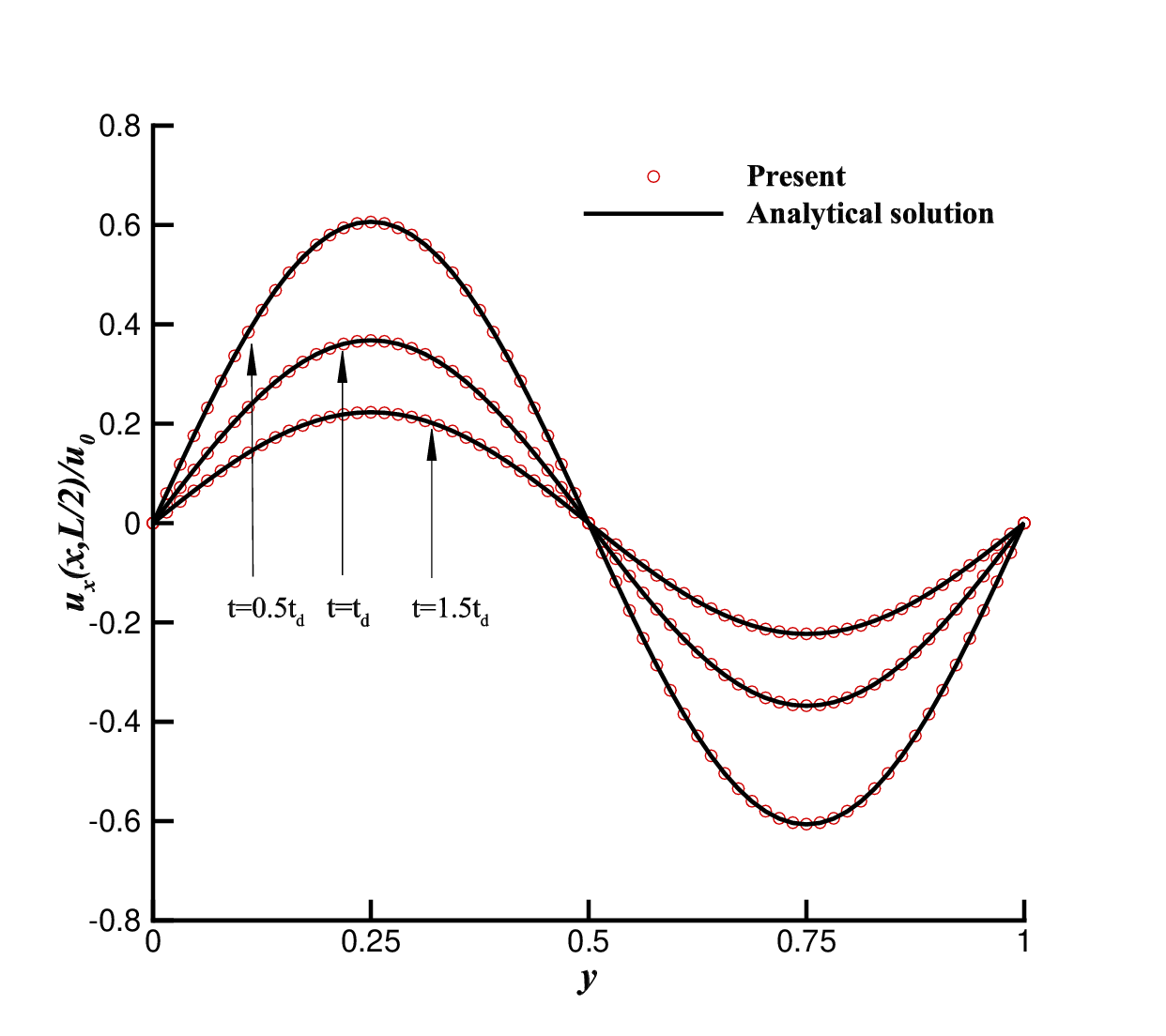}
	\end{subfigure}
	\hfill
	\begin{subfigure}[b]{0.48\textwidth}
		\centering
		\includegraphics[width=\linewidth]{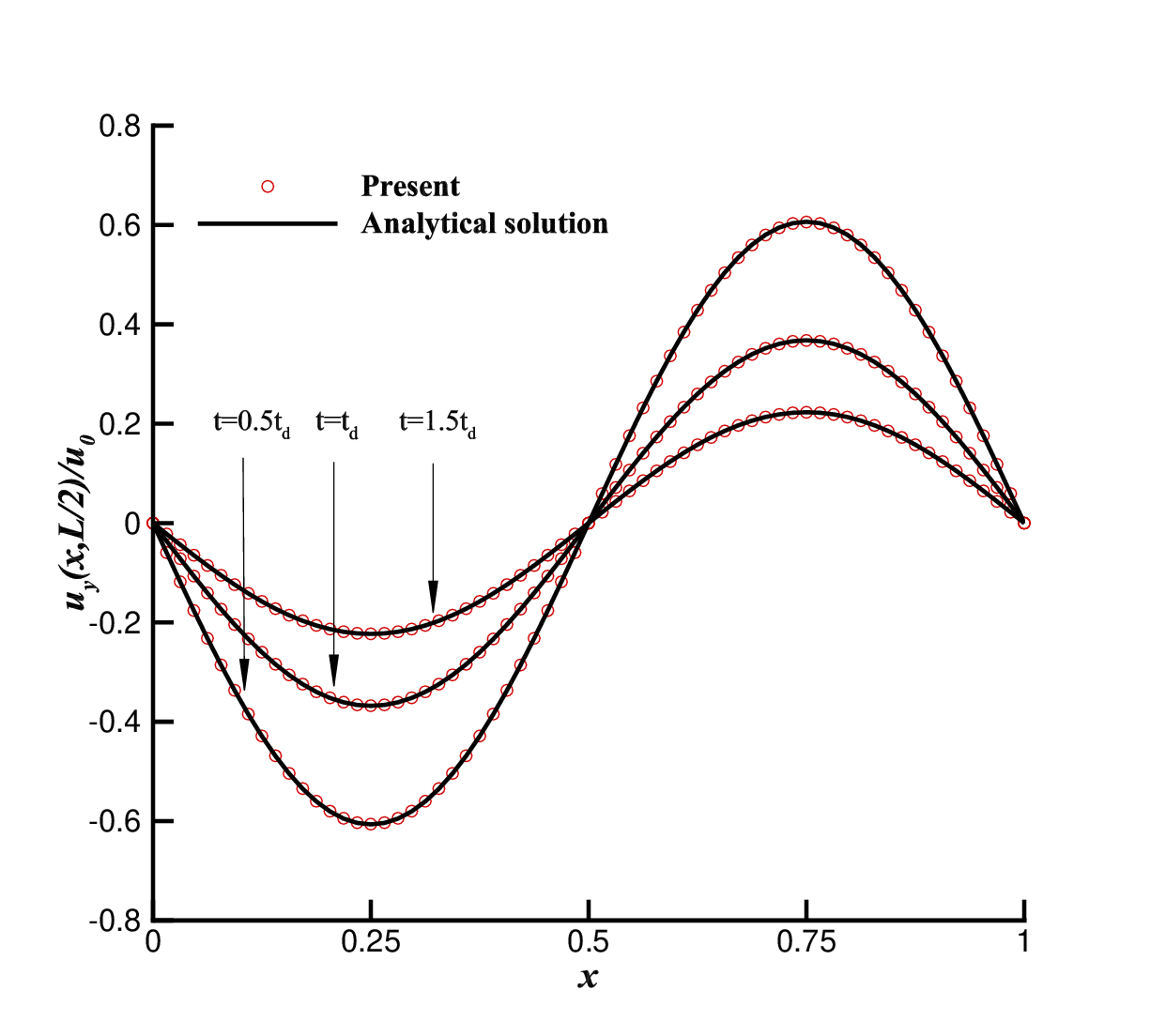}
	\end{subfigure}
	\caption{Comparison between the velocity components $u_x$ (left) and $u_y$ (right) computed by the P-TRT model and the analytical solution for the 2D decaying Taylor--Green vortex flow on a $64 \times 64$ grid with $\nu = 0.01$ at different times.}
	\label{fig:tg_solution}
\end{figure}

A grid convergence study was also performed with five resolutions: $N^2 = 32^2, 64^2, 128^2, 256^2$, and $512^2$. The computed $L_2$ errors are presented in Figure~\ref{fig:grid_convergence}. The P-TRT model maintains the second-order spatial convergence of the original TRT-RLB formulation, and the error magnitudes are consistent across the entire range of grid resolutions. This result is expected from the equivalence theorems: since the two models produce the identical post-collision distribution, they must exhibit the same truncation error structure.

\begin{figure}[H]
	\centering
	\includegraphics[width=0.5\linewidth]{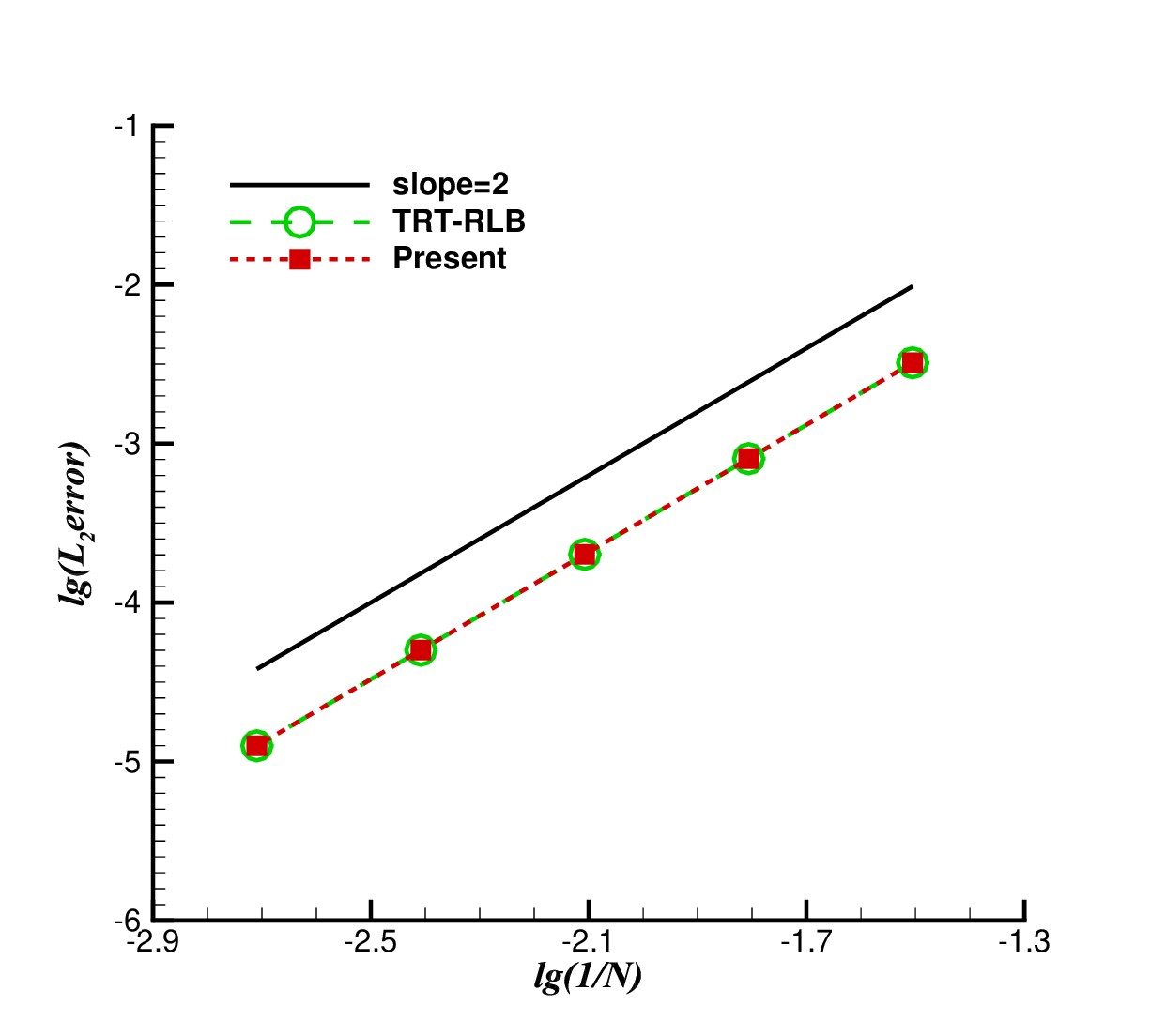}
	\caption{Grid convergence study showing the $L_2$ error norms of the TRT-RLB and P-TRT models for the 2D decaying Taylor--Green vortex flow ($\nu = 0.01$) at $t = t_d$.}
	\label{fig:grid_convergence}
\end{figure}

\subsection{Force-Driven Poiseuille Flow: Elimination of Numerical Slip}

A key feature of the TRT framework is its ability to eliminate the viscosity-dependent numerical slip inherent in the standard half-way bounce-back (HWBB) scheme \cite{ladd1994numerical}. It has been confirmed that the original TRT-RLB model effectively eliminates this numerical slip \cite{WOS:001361481500001}. Since the P-TRT model is mathematically equivalent to TRT-RLB, it must inherit this property. To verify this numerically, the 2D force-driven Poiseuille flow between two parallel plates is simulated.

We consider a steady, fully developed laminar flow driven by a constant body force. Under these conditions \cite{he1997analytic}:
\begin{equation}
	\rho = \text{const.}, \quad \partial_x u_x = 0, \quad \partial_x u_y = 0,
\end{equation}
with $u_y = 0$ throughout the domain. The body force acts solely in the streamwise direction: $F_x = \rho g$, $F_y = 0$. Under the HWBB scheme, the discrete analytical solution for the velocity profile is:
\begin{equation}
	u = \frac{4u_c}{N^2} y_i(N - y_i),
\end{equation}
where $N = L/\Delta x$, and $y_i = i - 0.5$ with $i = 1, \dots, N$ reflects the half-grid offset of the HWBB scheme. The centerline velocity is $u_c = L^2 g/(8\nu)$.

In the numerical experiments, $L = 1$, the grid size is $N_x \times N_y = 4 \times 32$, $c = 1$, $\rho_0 = 1$, $u_c = 0.1$, and simulations are performed at $Re = 1$. First, the influence of the magic parameter $\Lambda_s$ on numerical accuracy is examined. Figure~\ref{fig:Eliminate_Numerical} shows the $L_2$ error as a function of $\Lambda_s$. The numerical slip is completely eliminated when $\Lambda_s = 3/16$, consistent with the theoretical prediction for the TRT-RLB framework \cite{WOS:001361481500001}. This confirms that the ghost-mode purification preserves the slip-elimination mechanism of the TRT family: the specific relationship between $\tau_{s,1}$ and $\tau_{s,2}$ that cancels the viscosity-dependent boundary error is unaffected by the regularization procedure.

\begin{figure}[htbp]
	\centering
	\includegraphics[width=0.5\linewidth]{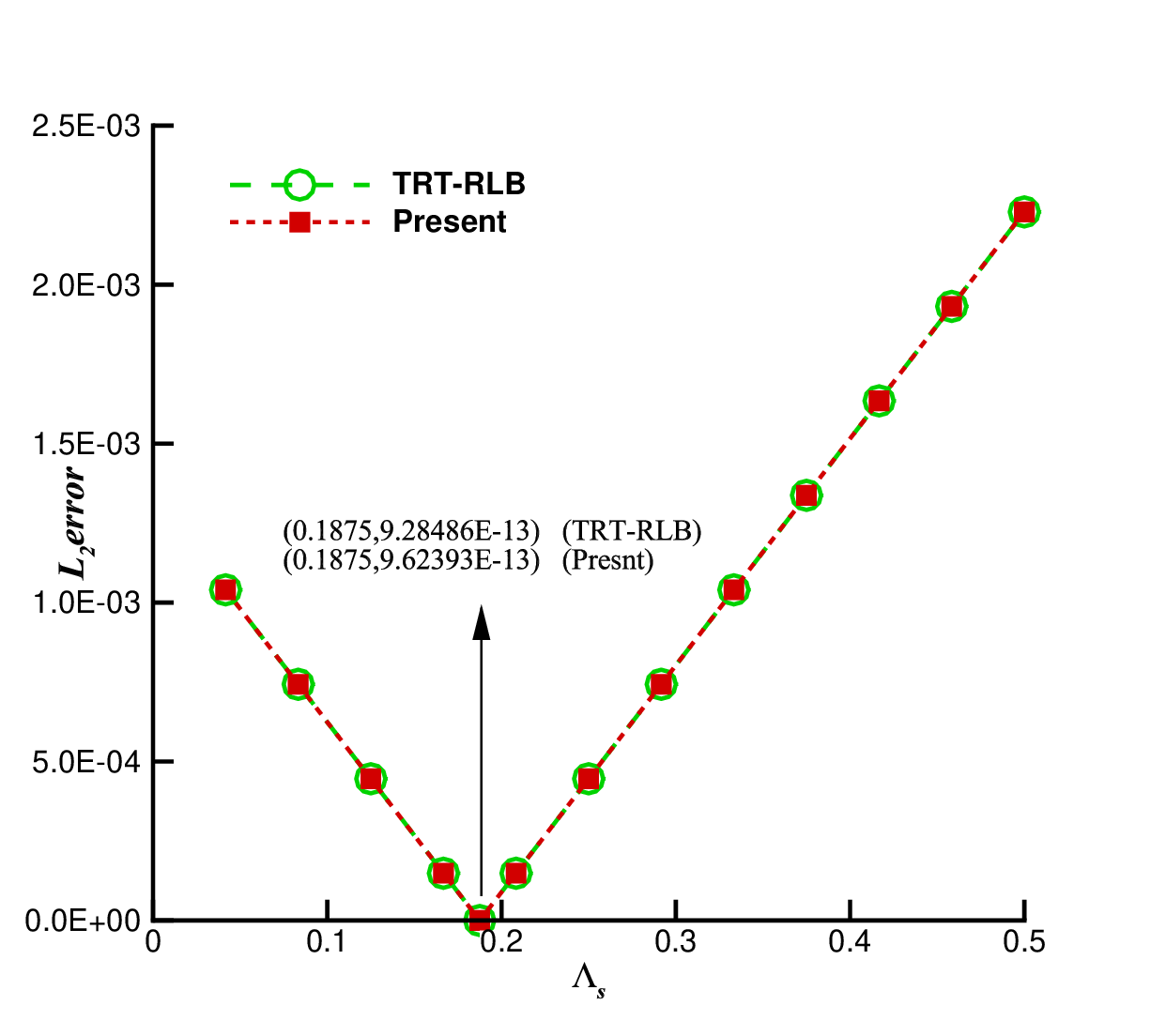}
	\caption{The $L_2$ error versus the magic parameter $\Lambda_s$ for the P-TRT model. The vanishing error at $\Lambda_s = 3/16$ confirms the slip-elimination property inherited from the TRT framework.}
	\label{fig:Eliminate_Numerical}
\end{figure}

Furthermore, simulations are performed at $Re = 1, 5$, and $10$ with $\Lambda_s = 3/16$. Figure~\ref{fig:Re_1_10} compares the velocity profiles with the analytical solutions. Table~\ref{table:p_l2} lists the $L_2$ errors, which are consistent with the original TRT-RLB model to machine precision. This is a direct consequence of the mathematical equivalence: the two models produce the same post-collision distribution and therefore the same steady-state solution.

\begin{figure}[htbp]
	\centering
	\includegraphics[width=0.5\linewidth]{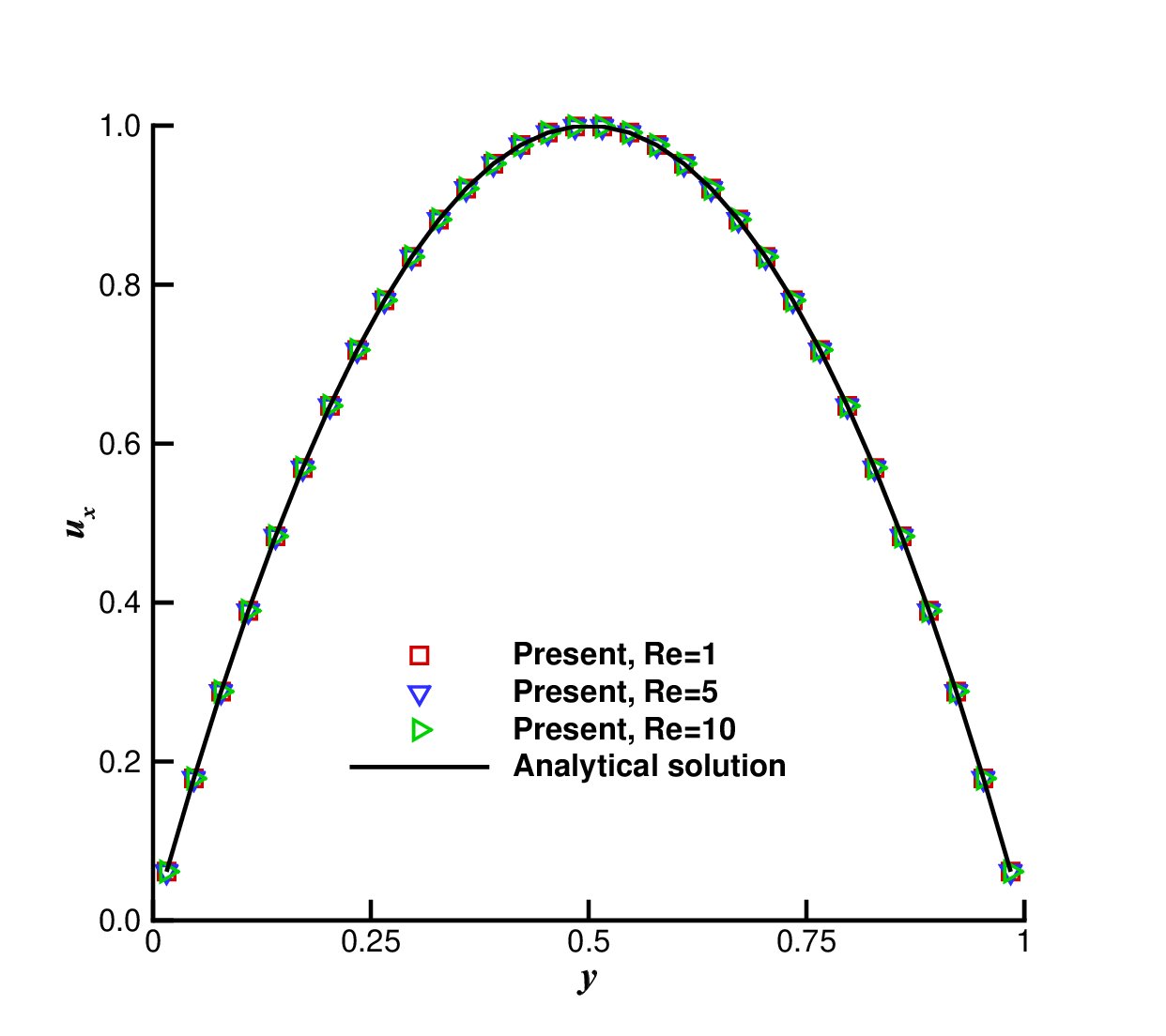}
	\caption{Comparison of horizontal velocity profiles between the P-TRT model and the analytical solution at $Re = 1, 5, 10$.}
	\label{fig:Re_1_10}
\end{figure}

\begin{table}[htbp]
	\centering
	\caption{Comparison of $L_2$ errors between the original TRT-RLB model and the P-TRT model for force-driven Poiseuille flow.}
	\small
	\begin{tabular}{c c c c}
		\toprule
		\multirow{2}{*}{$\mathrm{Re}$} & \multirow{2}{*}{$\tau_{s,1}$} & \multicolumn{2}{c}{$L_2$ error} \\
		\cmidrule(lr){3-4}
		& & TRT-RLB & P-TRT \\
		\midrule
		1   & 10.1 & $9.3\times 10^{-13}$ & $9.6\times 10^{-13}$ \\
		5   & 2.42 & $5.3\times 10^{-13}$ & $4.2\times 10^{-13}$ \\
		10  & 1.46 & $1.2\times 10^{-12}$ & $1.1\times 10^{-12}$ \\
		\bottomrule
	\end{tabular}
	\label{table:p_l2}
\end{table}

\subsection{Creeping Flow Past a Square Cylinder: Robustness}

To verify the robustness of the P-TRT model at low Reynolds numbers, we simulate the creeping flow past a square cylinder. The computational domain, illustrated in Figure~\ref{fig:square_flow}, is a square region with side length $L = 50D$, where $D$ is the side length of the cylinder. The cylinder is centered in the domain. A uniform velocity $u_c$ is imposed at the left boundary using the HWBB scheme, a constant pressure outlet is implemented at the right boundary via the Zou--He method, and periodic conditions are applied to the top and bottom boundaries. The no-slip condition on the cylinder surface is enforced by the HWBB scheme. The simulation is governed by three dimensionless parameters: $Re$, $Ma$, and the incompressibility parameter introduced by Gsell et al.\ \cite{gsell2021lattice}:
\begin{equation}
	\mathcal{T} = \frac{Ma^2}{Re}.
\end{equation}

\begin{figure}[h]
	\centering
	\includegraphics[width=0.7\linewidth]{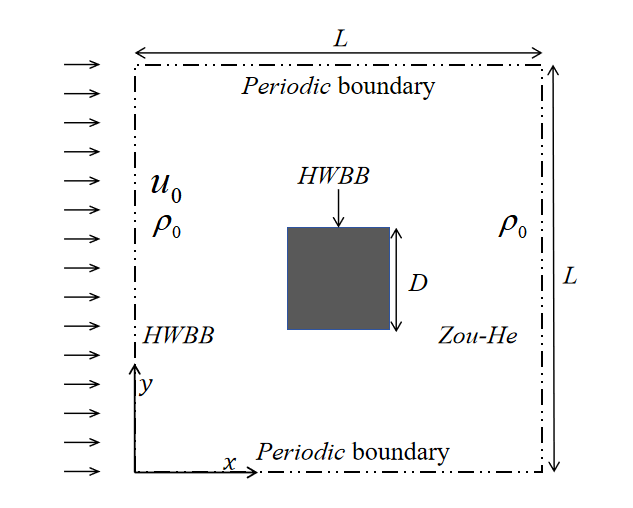}
	\caption{Schematic of the computational domain and boundary conditions for the creeping flow past a square cylinder.}
	\label{fig:square_flow}
\end{figure}

In this section, the convergence criterion is set to $L_2 < 10^{-9}$ to accommodate the small characteristic velocities. The parameters are $\rho_0 = 1$, $c = 1$, $\Delta x = 1$, and $\Lambda_s = 3/16$. The investigation consists of three parts: grid independence analysis, accuracy validation, and performance at ultra-low Reynolds numbers.

First, a grid independence test was conducted at $Re = 1$ and $Ma = 0.1$ with six grid resolutions. The drag coefficient is defined as:
\begin{equation}
	C_d = \frac{2F_x}{\rho_0 u_0^2 D},
\end{equation}
where $F_x$ is computed via the momentum exchange method:
\begin{equation}
	F_x = \frac{\Delta x^2}{\Delta t} \sum_{i \in I_w} e_{ix} \left[ f_{\bar{i}}^*(\boldsymbol{x}_s, t) + f_i^*(\boldsymbol{x}_f, t) \right].
\end{equation}
Here, $I_w$ denotes the set of discrete directions pointing from the fluid nodes toward the wall, $\boldsymbol{x}_f$ is a fluid node adjacent to the wall, and $\boldsymbol{x}_s$ the corresponding solid node. The results are shown in Figure~\ref{subfig:grid}. The drag coefficient converges with increasing resolution, and the P-TRT results are identical to those of TRT-RLB. A resolution of $D = 21\Delta x$ is adopted for subsequent simulations.

Subsequently, we validated accuracy by simulating five cases with Reynolds numbers from 2 to 20 and comparing with the empirical drag law of Sen et al.\ \cite{sen2011flow}:
\begin{equation}
	C_d = 0.7496 + 10.5767\, Re^{-0.66}, \quad Re \in [2, 40].
\end{equation}
The results in Figure~\ref{subfig:Re2-20} show excellent agreement with the reference data and consistency with the TRT-RLB model.

Finally, we investigated the low-Reynolds-number regime ($Re \le 0.1$) with $\mathcal{T} = 0.1$. The viscous drag coefficient $C_{d,\mu} = C_d \cdot Re$, which is independent of the Reynolds number in the Stokes limit, is recorded in Table~\ref{tab_3}. As the Reynolds number decreases from $10^{-1}$ to $10^{-6}$, $C_{d,\mu}$ converges to approximately 8.252. The P-TRT results are identical to the original TRT-RLB model at every Reynolds number tested. Furthermore, the relaxation time $\tau_{s,1}$ reaches values as high as 11503 without divergence, demonstrating robustness far exceeding single-relaxation-time models such as BGK and RLB, which typically require $\tau_{s,1} < 3$ for stability \cite{gsell2021lattice}. This robustness originates from the explicit removal of ghost modes in the P-TRT collision operator: even at extreme values of $\tau_{s,1}$, the purified collision step prevents the accumulation of non-physical modes that would otherwise destabilize the computation.

\begin{figure}[H]
	\centering
	\begin{subfigure}[b]{0.48\textwidth}
		\centering
		\includegraphics[width=\linewidth]{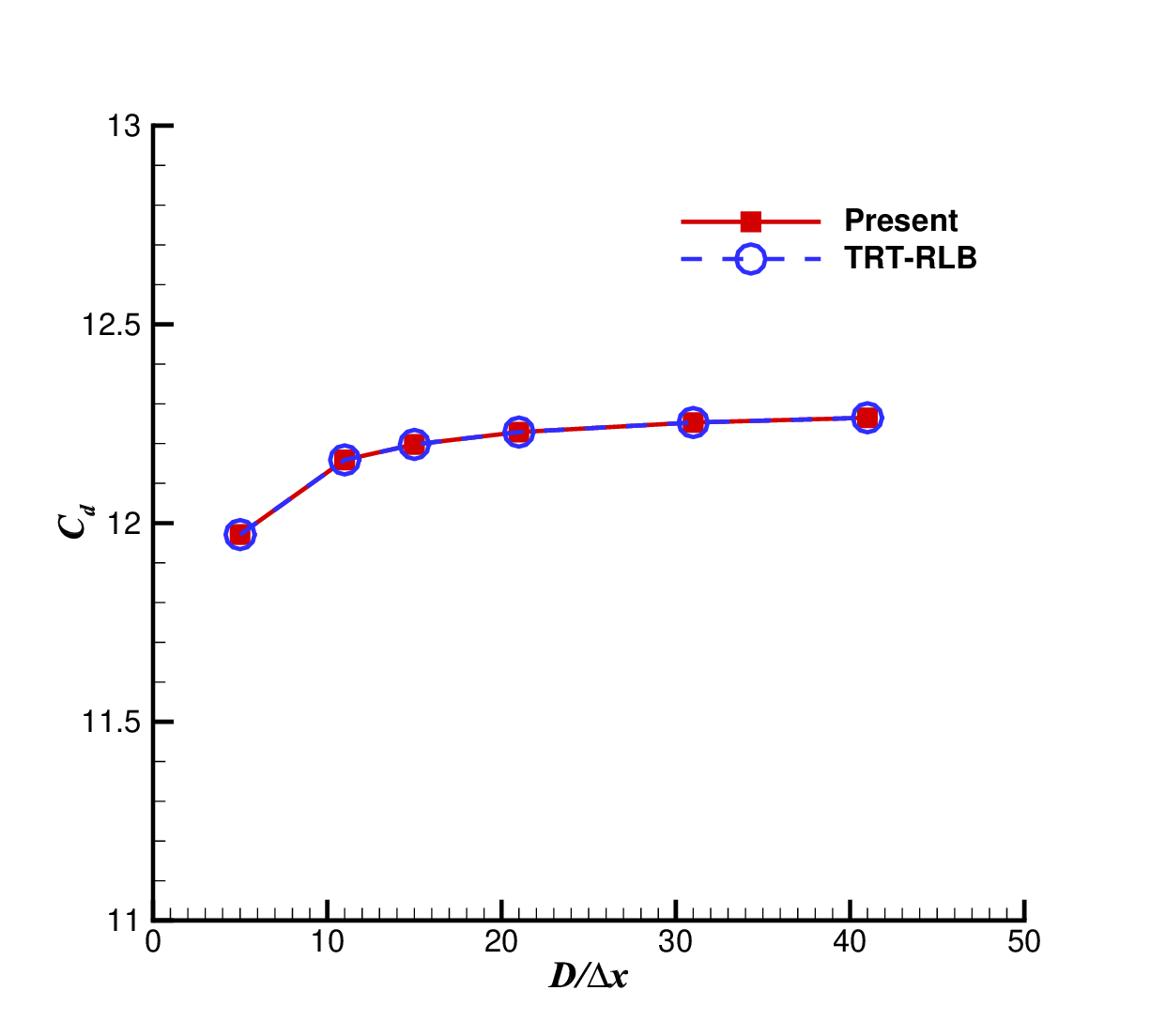}
		\caption{}
		\label{subfig:grid}
	\end{subfigure}
	\hfill
	\begin{subfigure}[b]{0.48\textwidth}
		\centering
		\includegraphics[width=\linewidth]{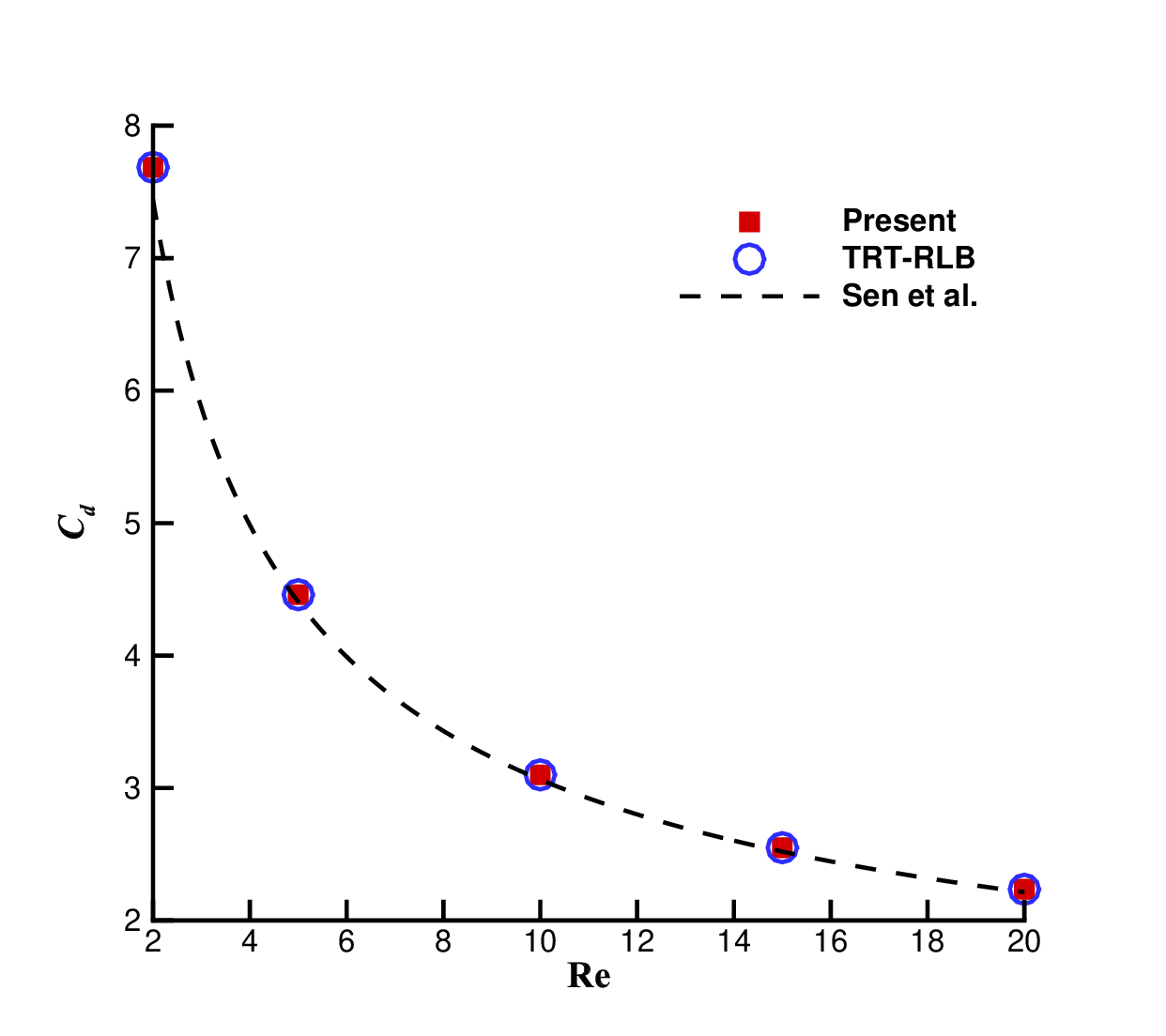}
		\caption{}
		\label{subfig:Re2-20}
	\end{subfigure}
	\caption{Numerical analysis of the flow past a square cylinder using the P-TRT model. (a) Convergence of the drag coefficient $C_d$ with increasing grid resolution $D/\Delta x$ ($Re=1, Ma=0.1$); (b) comparison of the computed $C_d$ with reference data at different Reynolds numbers ($D/\Delta x = 21$). Both sets of results are identical to those of the original TRT-RLB model.}
	\label{fig:Square_Cylinder}
\end{figure}

\begin{table}[htbp]
	\centering
	\begin{tabular}{l c c c c}
		\hline
		\multirow{2}{*}{$\mathrm{Re}$} & \multirow{2}{*}{$\mathrm{Ma}$} & \multirow{2}{*}{$\tau_{s,1}$} &
		\multicolumn{2}{c}{$C_{d,\mu}$} \\
		\cline{4-5}
		& & & TRT-RLB & P-TRT \\
		\hline
		$10^{-1}$ & $1.00 \times 10^{-1}$ & 37 & 8.449 & 8.449 \\
		$10^{-2}$ & $3.16 \times 10^{-2}$ & 116 & 8.260 & 8.260 \\
		$10^{-3}$ & $1.00 \times 10^{-2}$ & 364 & 8.253 & 8.253 \\
		$10^{-4}$ & $3.16 \times 10^{-3}$ & 1151 & 8.253 & 8.253 \\
		$10^{-5}$ & $1.00 \times 10^{-3}$ & 3638 & 8.252 & 8.252 \\
		$10^{-6}$ & $3.16 \times 10^{-4}$ & 11503 & 8.252 & 8.252 \\
		\hline
	\end{tabular}
	\caption{Comparison of the viscous drag coefficient $C_{d,\mu}$ between TRT-RLB and P-TRT at low Reynolds numbers. Parameters: $\mathcal{T} = 0.1$, $D/\Delta x = 21$, $L = 50D$.}
	\label{tab_3}
\end{table}

\section{Conclusions}
\label{sec:4}

In this work, we have established a rigorous theoretical connection between the regularized lattice Boltzmann method and the ghost-mode structure of the two-relaxation-time collision operator. By invoking the orthogonal decomposition of the discrete velocity space into hydrodynamic and non-hydrodynamic subspaces, we proved that the Hermite-based regularization employed in the TRT-RLB model of Yu et al.\ \cite{WOS:001361481500001} is mathematically equivalent to the explicit removal of ghost modes from the standard TRT collision operator. Specifically, the second-order Hermite projection reduces to the symmetric non-equilibrium part minus the ghost-mode contribution (Theorem~\ref{thm:2nd}), and the third-order Hermite projection reduces to the antisymmetric non-equilibrium part under momentum conservation (Theorem~\ref{thm:3rd}). For the D2Q9 and D3Q19 lattices, both equivalences are exact, meaning that the two formulations produce the identical post-collision distribution at every grid point and time step.

Based on this equivalence, we proposed the Purified TRT (P-TRT) model, which upgrades the standard TRT by explicitly filtering out its ghost modes through simple algebraic operations. The P-TRT model offers a dual interpretation depending on the user's perspective:

For TRT practitioners, it provides a minimal and physically transparent modification---subtracting a small number of identifiable ghost-mode contributions from the symmetric collision term---that elevates the stability of the classical TRT scheme to the regularization level without sacrificing its algorithmic simplicity or boundary-accuracy properties.

For TRT-RLB users, it provides a mathematically equivalent but substantially more efficient algorithm that replaces the expensive Hermite tensor projections with parity decomposition and scalar ghost-mode subtraction, achieving a 71\% reduction in non-equilibrium collision operations for the D2Q9 and D3Q19 lattices.

Extensive numerical validation through four benchmark problems confirmed that the P-TRT model preserves all the desirable properties of the original TRT-RLB formulation: robust stability at Reynolds numbers up to $10^7$, second-order spatial convergence, exact elimination of viscosity-dependent numerical slip at bounce-back boundaries ($\Lambda_s = 3/16$), and reliable performance at ultra-low Reynolds numbers with relaxation times exceeding $10^4$. The measured wall-clock speedup of approximately 30\% for the D2Q9 lattice is consistent with the theoretical FLOP reduction when the collision step dominates the computational cost.

Beyond the immediate practical benefits, a linear stability analysis in the moment space reveals that the P-TRT collision operator annihilates the ghost eigenvalue ($z_G = 0$), thereby completely decoupling numerical stability from the non-hydrodynamic sector (Theorem~\ref{thm:stability}). This provides a rigorous theoretical explanation for the superior stability observed in the numerical benchmarks: the spectral radius of the P-TRT evolution operator is bounded above by that of the standard TRT (Remark~\ref{rmk:spectral}), and the stability bottleneck is shifted from the ghost mode to the physically meaningful hydrodynamic modes. The analytical determination of the optimal free parameter $\tau_{s,2}$, which requires a parametric eigenvalue study of the hydrodynamic evolution matrix over the full Brillouin zone, is identified as an important direction for future work.

More broadly, the present analysis reveals a deeper insight: the core mechanism by which regularization improves the stability of LBM collision operators is the annihilation of non-hydrodynamic ghost modes. This understanding, obscured in the original Hermite tensor formalism, becomes transparent in the orthogonal decomposition framework. We believe that this ghost-mode perspective may prove useful for the analysis and design of other advanced collision operators in the lattice Boltzmann community.

\section*{Acknowledgments}
This work was supported by the National Natural Science Foundation of China (Grant Nos. 12101527, 12271464, and 12371373), the Natural Science Foundation of Hunan Province (Grant No. 2026JJ60005), the 111 Project (No. D23017), and the Program for Science and Technology Innovative Research Team in Higher Educational Institutions of Hunan Province. Computational resources were provided by the High Performance Computing Platform of Xiangtan University.

\appendix

\section{Ghost-Mode Enumeration and Explicit Formulas}
\label{app:ghost_modes}

This appendix provides the complete enumeration of ghost modes and the explicit implementation formulas for the P-TRT model on the D2Q9, D3Q19, and D3Q27 lattices. Table~\ref{tab:lattice_summary} summarizes the key structural parameters.

\begin{table}[h]
	\centering
	\caption{Structural parameters of common lattice models. Here $Q$ is the number of discrete velocities, $Q_0$ the number of rest velocities, $Q_p$ the number of inversion pairs, $N_G$ the number of symmetric ghost modes, and $N_3$ the number of independent third-order Hermite polynomials.}
	\label{tab:lattice_summary}
	\begin{tabular}{l c c c c c c c c}
		\toprule
		Lattice & $Q$ & $D$ & $c_s^2$ & $Q_0$ & $Q_p$ & $\dim(\mathcal{S})$ & $\dim(\mathcal{A})$ & $N_G$ / $N_3$ \\
		\midrule
		D2Q9  & 9  & 2 & $c^2/3$ & 1 & 4  & 5  & 4  & 1 / 2 \\
		D3Q19 & 19 & 3 & $c^2/3$ & 1 & 9  & 10 & 9  & 3 / 6 \\
		D3Q27 & 27 & 3 & $c^2/3$ & 1 & 13 & 14 & 13 & 7 / 7 \\
		\bottomrule
	\end{tabular}
\end{table}

\subsection{D2Q9 Lattice}
\label{app:D2Q9}

\paragraph{Symmetric ghost mode ($N_G = 1$).}
The unique ghost mode is the fourth-order product:
\begin{equation}
	\phi_i^{(G)} = (e_{ix}^2 - c_s^2)(e_{iy}^2 - c_s^2),
\end{equation}
with values at each velocity direction:
\begin{center}
	\begin{tabular}{c|ccccccccc}
		$i$ & 0 & 1 & 2 & 3 & 4 & 5 & 6 & 7 & 8 \\
		\hline
		$\phi_i^{(G)}$ & $c_s^4$ & $-2c_s^4$ & $-2c_s^4$ & $-2c_s^4$ & $-2c_s^4$ & $4c_s^4$ & $4c_s^4$ & $4c_s^4$ & $4c_s^4$
	\end{tabular}
\end{center}
The weighted norm is $N_G^{(\phi)} = \sum_i w_i [\phi_i^{(G)}]^2 = 4c_s^8$. The ghost scalar moment and projection are:
\begin{equation}
	S_G = \sum_{j=0}^{8} \phi_j^{(G)} f_j^{\mathrm{neq}} = c_s^4 \left[ f_0^{\mathrm{neq}} - 2 \sum_{j=1}^{4} f_j^{\mathrm{neq}} + 4 \sum_{j=5}^{8} f_j^{\mathrm{neq}} \right],
\end{equation}
\begin{equation}
	(\hat{P}^{(G)} f^{\mathrm{neq}})_i = \frac{w_i \, \phi_i^{(G)}}{4 c_s^8} \, S_G.
\end{equation}

\paragraph{Antisymmetric subspace ($\dim(\mathcal{A}) = 4 = D + N_3$).}
No antisymmetric ghost modes exist. The two independent third-order Hermite modes are:
\begin{equation}
	\psi_i^{(xxy)} = e_{iy}(e_{ix}^2 - c_s^2), \qquad \psi_i^{(xyy)} = e_{ix}(e_{iy}^2 - c_s^2),
\end{equation}
with norms $N^{(xxy)} = N^{(xyy)} = 2c_s^6$. The purely cubic modes $\mathcal{H}^{(3)}_{xxx}$ and $\mathcal{H}^{(3)}_{yyy}$ vanish due to the lattice aliasing $e_{i\alpha}^3 = c^2 e_{i\alpha}$.

\paragraph{P-TRT implementation.}
The second-order regularization term is:
\begin{equation}
	\mathcal{R}_i^{(2)} = f_i^{\mathrm{neq},+} - \frac{w_i \, \phi_i^{(G)}}{4c_s^8} \, S_G.
\end{equation}
The third-order regularization term (without external force) is:
\begin{equation}
	\mathcal{R}_i^{(3)} = f_i^{\mathrm{neq},-}.
\end{equation}
Both relations are exact.

\subsection{D3Q19 Lattice}
\label{app:D3Q19}

The D3Q19 lattice consists of 1 rest velocity, 6 axial velocities (3 inversion pairs), and 12 face-diagonal velocities (6 inversion pairs). No body-diagonal velocities are present.

\paragraph{Symmetric ghost modes ($N_G = 3$).}
Three orthogonal fourth-order ghost modes exist:
\begin{align}
	\phi_i^{(G_1)} &= (e_{ix}^2 - c_s^2)(e_{iy}^2 - c_s^2), \\
	\phi_i^{(G_2)} &= (e_{iy}^2 - c_s^2)(e_{iz}^2 - c_s^2), \\
	\phi_i^{(G_3)} &= (e_{iz}^2 - c_s^2)(e_{ix}^2 - c_s^2).
\end{align}
These modes are mutually orthogonal under the weighted inner product. Each has the weighted norm $N_{G,k}^{(\phi)} = 4c_s^8$. The absence of body-diagonal velocities ($e_{ix}, e_{iy}, e_{iz}$ simultaneously nonzero) ensures the orthogonality and the fact that no higher-order symmetric ghost modes arise.

\paragraph{Antisymmetric subspace ($\dim(\mathcal{A}) = 9 = D + N_3$).}
No antisymmetric ghost modes exist. The six independent third-order Hermite modes are:
\begin{align}
	&\psi_i^{(xxy)} = e_{iy}(e_{ix}^2 - c_s^2), \quad \psi_i^{(xxz)} = e_{iz}(e_{ix}^2 - c_s^2), \quad \psi_i^{(xyy)} = e_{ix}(e_{iy}^2 - c_s^2), \\
	&\psi_i^{(yyz)} = e_{iz}(e_{iy}^2 - c_s^2), \quad \psi_i^{(xzz)} = e_{ix}(e_{iz}^2 - c_s^2), \quad \psi_i^{(yzz)} = e_{iy}(e_{iz}^2 - c_s^2).
\end{align}
The mixed mode $\mathcal{H}^{(3)}_{xyz}$ vanishes on D3Q19 because no velocity has all three components nonzero.

\paragraph{P-TRT implementation.}
The formulas are structurally identical to the D2Q9 case:
\begin{equation}
	\mathcal{R}_i^{(2)} = f_i^{\mathrm{neq},+} - \sum_{k=1}^{3} \frac{w_i \, \phi_{i,k}^{(G)}}{4c_s^8} \sum_{j} \phi_{j,k}^{(G)} f_j^{\mathrm{neq}}, \qquad \mathcal{R}_i^{(3)} = f_i^{\mathrm{neq},-}.
\end{equation}
Both relations are exact.

\subsection{D3Q27 Lattice}
\label{app:D3Q27}

The D3Q27 lattice consists of 1 rest velocity, 6 axial velocities, 12 face-diagonal velocities, and 8 body-diagonal velocities ($Q_0 = 1$, $Q_p = 13$).

\paragraph{Symmetric ghost modes ($N_G = 7$).}
Seven orthogonal ghost modes exist, organized into two groups:

\emph{Group~1: Fourth-order diagonal coupling modes (3) and the sixth-order full coupling mode (1).}
\begin{align}
	\phi_i^{(G_1)} &= (e_{ix}^2 - c_s^2)(e_{iy}^2 - c_s^2), \\
	\phi_i^{(G_2)} &= (e_{iy}^2 - c_s^2)(e_{iz}^2 - c_s^2), \\
	\phi_i^{(G_3)} &= (e_{iz}^2 - c_s^2)(e_{ix}^2 - c_s^2), \\
	\phi_i^{(G_4)} &= (e_{ix}^2 - c_s^2)(e_{iy}^2 - c_s^2)(e_{iz}^2 - c_s^2) / c_s^2.
\end{align}

\emph{Group~2: Fourth-order shear--energy coupling modes (3).}
\begin{align}
	\phi_i^{(G_5)} &= e_{ix} e_{iy} (e_{iz}^2 - c_s^2), \\
	\phi_i^{(G_6)} &= e_{iy} e_{iz} (e_{ix}^2 - c_s^2), \\
	\phi_i^{(G_7)} &= e_{iz} e_{ix} (e_{iy}^2 - c_s^2).
\end{align}

\paragraph{Antisymmetric subspace ($\dim(\mathcal{A}) = 13 > D + N_3 = 10$).}
The deficit of 3 dimensions corresponds to fifth-order antisymmetric modes:
\begin{equation}
	\psi_i^{(5,x)} = e_{ix}(e_{iy}^2 - c_s^2)(e_{iz}^2 - c_s^2),
\end{equation}
and its cyclic permutations $\psi_i^{(5,y)}$ and $\psi_i^{(5,z)}$. These modes are nonzero only at body-diagonal velocities.

\paragraph{P-TRT implementation (hybrid strategy)}
Since $N_G = 7 > M_2 = 6$, the ghost subtraction method for the second-order term is more expensive than the direct Hermite projection. Therefore, a hybrid strategy is recommended for D3Q27:
\begin{equation}
	\mathcal{R}_i^{(2)} = w_i \frac{\mathcal{H}_{i,\alpha\beta}^{(2)}}{2c_s^4} \mathcal{A}_{\alpha\beta}^{\mathrm{neq}} \quad \text{(tensor projection, exact)},
\end{equation}
\begin{equation}
	\mathcal{R}_i^{(3)} \approx f_i^{\mathrm{neq},-} \quad \text{(antisymmetric approximation)}.
\end{equation}
The approximation in $\mathcal{R}_i^{(3)}$ neglects the fifth-order residual, which is $O(\mathrm{Kn}^3)$ in the Chapman--Enskog expansion and negligibly small in practice. If exact equivalence is required, one may subtract the fifth-order projection explicitly:
\begin{equation}
	\mathcal{R}_i^{(3)} = f_i^{\mathrm{neq},-} - \sum_{k \in \{x,y,z\}} (\hat{P}^{(5,k)} f^{\mathrm{neq}})_i + \frac{\Delta t}{2} \frac{w_i \, e_{i\alpha}}{c_s^2} F_\alpha.
\end{equation}


\begin{thebibliography}{10}
	\expandafter\ifx\csname url\endcsname\relax
	\def\url#1{\texttt{#1}}\fi
	\expandafter\ifx\csname urlprefix\endcsname\relax\def\urlprefix{URL }\fi
	\expandafter\ifx\csname href\endcsname\relax
	\def\href#1#2{#2} \def\path#1{#1}\fi
	
	\bibitem{oh2025numerical}
	H.~Oh, H.~Jo, Numerical study of turbulent flow boiling heat transfer in
	structured cooling channels using lattice Boltzmann method with advanced
	outlet boundary conditions, Physics of Fluids 37~(1) (2025).
	
	\bibitem{montessori2024high}
	A.~Montessori, M.~La~Rocca, G.~Amati, M.~Lauricella, A.~Tiribocchi, S.~Succi,
	High-order thread-safe lattice Boltzmann model for high performance computing
	turbulent flow simulations, Physics of Fluids 36~(3) (2024).
	
	\bibitem{guo2020improved}
	S.~Guo, Y.~Feng, P.~Sagaut, Improved standard thermal lattice Boltzmann model
	with hybrid recursive regularization for compressible laminar and turbulent
	flows, Physics of Fluids 32~(12) (2020).
	
	\bibitem{suga2015D3Q27}
	K.~Suga, Y.~Kuwata, K.~Takashima, R.~Chikasue, A D3Q27 multiple-relaxation-time
	lattice Boltzmann method for turbulent flows, Computers \& Mathematics with
	Applications 69~(6) (2015) 518--529.
	
	\bibitem{lulli2024metastable}
	M.~Lulli, L.~Biferale, G.~Falcucci, M.~Sbragaglia, D.~Yang, X.~Shan, Metastable
	and unstable hydrodynamics in multiphase lattice Boltzmann, Physical Review E
	109~(4) (2024) 045304.
	
	\bibitem{huang2024three}
	R.~Huang, Q.~Li, Y.~Qiu, Three-dimensional lattice Boltzmann model with
	self-tuning equation of state for multiphase flows, Physical Review E 109~(6)
	(2024) 065306.
	
	\bibitem{zhang2025infiltration}
	X.~Zhang, T.~Huang, Z.~Ge, T.~Man, H.~E. Huppert, Infiltration characteristics
	of slurries in porous media based on the coupled lattice-Boltzmann discrete
	element method, Computers and Geotechnics 177 (2025) 106865.
	
	\bibitem{sourya2024comparative}
	D.~P. Sourya, P.~S. Gurugubelli, S.~Bhaskaran, N.~Vorhauer-Huget, E.~Tsotsas,
	V.~K. Surasani, A comparative study on the lattice Boltzmann method and the
	VOF-continuum method for oxygen transport in the anodic porous transport
	layer of an electrolyzer, International Journal of Hydrogen Energy 92 (2024)
	1091--1098.
	
	\bibitem{hosseini2024lattice}
	S.~A. Hosseini, P.~Boivin, D.~Th{\'e}venin, I.~Karlin, Lattice Boltzmann
	methods for combustion applications, Progress in Energy and Combustion
	Science 102 (2024) 101140.
	
	\bibitem{stockinger2024lattice}
	C.~Stockinger, A.~Raiolo, R.~Alamian, A.~Hadjadj, U.~Nieken, M.~S. Shadloo,
	Lattice Boltzmann simulations of heterogeneous combustion reactions for
	application in porous media, Engineering Analysis with Boundary Elements 166
	(2024) 105817.
	
	\bibitem{lei2021study}
	T.~Lei, Z.~Wang, K.~H. Luo, Study of pore-scale coke combustion in porous media
	using lattice Boltzmann method, Combustion and Flame 225 (2021) 104--119.
	
	\bibitem{tayyab2020hybrid}
	M.~Tayyab, S.~Zhao, Y.~Feng, P.~Boivin, Hybrid regularized lattice-Boltzmann
	modelling of premixed and non-premixed combustion processes, Combustion and
	Flame 211 (2020) 173--184.
	
	\bibitem{fei2023lattice}
	L.~Fei, F.~Qin, J.~Zhao, D.~Derome, J.~Carmeliet, Lattice Boltzmann modelling
	of isothermal two-component evaporation in porous media, Journal of Fluid
	Mechanics 955 (2023) A18.
	
	\bibitem{kumar2023lattice}
	S.~Kumar, D.~Panda, P.~Ghodke, K.~M. Gangawane, Lattice Boltzmann method for
	heat transfer in phase change materials: a review, Journal of Thermal
	Analysis and Calorimetry 148~(17) (2023) 9263--9287.
	
	\bibitem{li2016pinning}
	Q.~Li, P.~Zhou, H.~Yan, Pinning--depinning mechanism of the contact line during
	evaporation on chemically patterned surfaces: A lattice Boltzmann study,
	Langmuir 32~(37) (2016) 9389--9396.
	
	\bibitem{cherkaoui2023numerical}
	I.~Cherkaoui, S.~Bettaibi, A.~Barkaoui, F.~Kuznik, Numerical study of pulsatile
	thermal magnetohydrodynamic blood flow in an artery with aneurysm using
	lattice Boltzmann method (LBM), Communications in Nonlinear Science and
	Numerical Simulation 123 (2023) 107281.
	
	\bibitem{cherkaoui2022magnetohydrodynamic}
	I.~Cherkaoui, S.~Bettaibi, A.~Barkaoui, F.~Kuznik, Magnetohydrodynamic blood
	flow study in stenotic coronary artery using lattice Boltzmann method,
	Computer Methods and Programs in Biomedicine 221 (2022) 106850.
	
	\bibitem{esfahani2020lattice}
	S.~S. Esfahani, X.~Zhai, M.~Chen, A.~Amira, F.~Bensaali, J.~AbiNahed, S.~Dakua,
	G.~Younes, A.~Baobeid, R.~A. Richardson, et~al., Lattice-Boltzmann
	interactive blood flow simulation pipeline, International Journal of Computer
	Assisted Radiology and Surgery 15~(4) (2020) 629--639.
	
	\bibitem{tiribocchi2025lattice}
	A.~Tiribocchi, M.~Durve, M.~Lauricella, A.~Montessori, J.-M. Tucny, S.~Succi,
	Lattice Boltzmann simulations for soft flowing matter, Physics Reports 1105
	(2025) 1--52.
	
	\bibitem{lallemand2021lattice}
	P.~Lallemand, L.-S. Luo, M.~Krafczyk, W.-A. Yong, The lattice Boltzmann method
	for nearly incompressible flows, Journal of Computational Physics 431 (2021)
	109713.
	
	\bibitem{kruger2017lattice}
	T.~Kr{\"u}ger, H.~Kusumaatmaja, A.~Kuzmin, O.~Shardt, G.~Silva, E.~M. Viggen,
	The lattice Boltzmann method, Springer, 2017.
	
	\bibitem{guo2013lattice}
	Z.~Guo, C.~Shu, Lattice Boltzmann method and its application in engineering,
	Vol.~3, World Scientific, 2013.
	
	\bibitem{bhatnagar1954model}
	P.~L. Bhatnagar, E.~P. Gross, M.~Krook, A model for collision processes in
	gases. i. small amplitude processes in charged and neutral one-component
	systems, Physical Review 94~(3) (1954) 511.
	
	\bibitem{qian1992lattice}
	Y.-H. Qian, D.~d'Humi{\`e}res, P.~Lallemand, Lattice BGK models for
	Navier-Stokes equation, Europhysics Letters 17~(6) (1992) 479.
	
	\bibitem{lallemand2000theory}
	P.~Lallemand, L.-S. Luo, Theory of the lattice {B}oltzmann method: Dispersion,
	dissipation, isotropy, {G}alilean invariance, and stability, Physical Review
	E 61~(6) (2000) 6546.
	
	\bibitem{MRT1}
	D.~d'Humi{\`e}res, Multiple--relaxation--time lattice Boltzmann models in three
	dimensions, Philosophical Transactions of the Royal Society of London. Series
	A: Mathematical, Physical and Engineering Sciences 360~(1792) (2002)
	437--451.
	
	\bibitem{du2006multi}
	R.~Du, B.~Shi, X.~Chen, Multi-relaxation-time lattice Boltzmann model for
	incompressible flow, Physics Letters A 359~(6) (2006) 564--572.
	
	\bibitem{ning2016numerical}
	Y.~Ning, K.~N. Premnath, D.~V. Patil, Numerical study of the properties of the
	central moment lattice Boltzmann method, International Journal for Numerical
	Methods in Fluids 82~(2) (2016) 59--90.
	
	\bibitem{fei2017consistent}
	L.~Fei, K.~H. Luo, Consistent forcing scheme in the cascaded lattice Boltzmann
	method, Physical Review E 96~(5) (2017) 053--307.
	
	\bibitem{geier2006cascaded}
	M.~Geier, A.~Greiner, J.~G. Korvink, Cascaded digital lattice Boltzmann
	automata for high reynolds number flow, Physical Review E—Statistical,
	Nonlinear, and Soft Matter Physics 73~(6) (2006) 066705.
	
	\bibitem{premnath2009incorporating}
	K.~N. Premnath, S.~Banerjee, Incorporating forcing terms in cascaded lattice
	Boltzmann approach by method of central moments, Physical Review
	E--Statistical, Nonlinear, and Soft Matter Physics 80~(3) (2009) 036702.
	
	\bibitem{TRT2008study}
	I.~Ginzburg, F.~Verhaeghe, D.~d'Humieres, Study of simple hydrodynamic
	solutions with the two-relaxation-times lattice Boltzmann scheme,
	Communications in Computational Physics 3~(3) (2008) 519--581.
	
	\bibitem{TRT2008two}
	I.~Ginzburg, F.~Verhaeghe, D.~d'Humieres, Two-relaxation-time lattice Boltzmann
	scheme: About parametrization, velocity, pressure and mixed boundary
	conditions, Communications in Computational Physics 3~(2) (2008) 427--478.
	
	\bibitem{ginzburg2005equilibrium}
	I.~Ginzburg, Equilibrium-type and link-type lattice {B}oltzmann models for
	generic advection and anisotropic-dispersion equation, Advances in Water
	Resources 28~(11) (2005) 1171--1195.
	
	\bibitem{latt2006lattice}
	J.~Latt, B.~Chopard, Lattice {B}oltzmann method with regularized pre-collision
	distribution functions, Mathematics and Computers in Simulation 72~(2-6)
	(2006) 165--168.
	
	\bibitem{RLB}
	A.~Jonnalagadda, A.~Sharma, A.~Agrawal, Onsager-regularized lattice Boltzmann
	method: A nonequilibrium thermodynamics-based regularized lattice Boltzmann
	method, Physical Review E 104~(1) (2021) 015313.
	
	\bibitem{RLB1}
	C.~Coreixas, G.~Wissocq, G.~Puigt, J.-F. Boussuge, P.~Sagaut, Recursive
	regularization step for high-order lattice Boltzmann methods, Physical Review
	E 96~(3) (2017).
	
	\bibitem{RLB_yuan47}
	K.~K. Mattila, P.~C. Philippi, L.~A. Hegele, High-order regularization in
	lattice-Boltzmann equations, Physics of Fluids 29~(4) (2017).
	
	\bibitem{yang2024efficient}
	X.~Yang, Y.~Zhang, An efficient regularized lattice Boltzmann method to solve
	consistent and conservative phase field model for simulating incompressible
	two-phase flows, Advances in Applied Mathematics and Mechanics 16~(5) (2024)
	1277--1296.
	
	\bibitem{yu2026improved}
	Y.~Yu, S.~Chen, Y.~Zhou, L.~Wang, H.-Z. Yuan, S.~Shu, An improved lattice
	Boltzmann method with a novel conservative boundary scheme for viscoelastic
	fluid flows, Journal of Computational Physics (2026) 114667.
	
	\bibitem{feng2019solid}
	Y.~Feng, S.~Guo, J.~Jacob, P.~Sagaut, Solid wall and open boundary conditions
	in hybrid recursive regularized lattice Boltzmann method for compressible
	flows, Physics of Fluids 31~(12) (2019).
	
	\bibitem{WOS:001361481500001}
	Y.~Yu, Z.~Qin, S.~Chen, S.~Shu, H.~Yuan, Two-relaxation-time regularized
	lattice Boltzmann model for navier-stokes equations, Advances in Applied Mathematics and Mechanics 17~(2) (2025) 489--516.
	\newblock \href {https://doi.org/10.4208/aamm.OA-2024-0203}
	{\path{doi:10.4208/aamm.OA-2024-0203}}.
	
	\bibitem{guo2002discrete}
	Z.~Guo, C.~Zheng, B.~Shi, Discrete lattice effects on the forcing term in the
	lattice Boltzmann method, Physical Review E 65~(4) (2002) 046308.
	
	\bibitem{he1998discrete}
	X.~He, X.~Shan, G.~D. Doolen, Discrete Boltzmann equation model for nonideal
	gases, Physical Review E 57~(1) (1998) R13.
	
	\bibitem{postma2020force}
	B.~Postma, G.~Silva, Force methods for the two-relaxation-times lattice
	{{Boltzmann}}, Physical Review E 102~(6) (2020) 063307.
	\newblock \href {https://doi.org/10.1103/PhysRevE.102.063307}
	{\path{doi:10.1103/PhysRevE.102.063307}}.
	
	\bibitem{pearson1965computational}
	C.~E. Pearson, A computational method for viscous flow problems, Journal of
	Fluid Mechanics 21~(4) (1965) 611--622.
	
	\bibitem{ladd1994numerical}
	A.~J. Ladd, Numerical simulations of particulate suspensions via a discretized
	Boltzmann equation. Part 1. theoretical foundation, Journal of Fluid
	Mechanics 271 (1994) 285--309.
	
	\bibitem{he1997analytic}
	X.~He, Q.~Zou, L.-S. Luo, M.~Dembo, Analytic solutions of simple flows and
	analysis of nonslip boundary conditions for the lattice Boltzmann BGK model,
	Journal of Statistical Physics 87~(1) (1997) 115--136.
	
	\bibitem{gsell2021lattice}
	S.~Gsell, U.~d'Ortona, J.~Favier, Lattice-Boltzmann simulation of creeping
	generalized newtonian flows: theory and guidelines, Journal of Computational
	Physics 429 (2021) 109943.
	
	\bibitem{sen2011flow}
	S.~Sen, S.~Mittal, G.~Biswas, Flow past a square cylinder at low reynolds
	numbers, International Journal for Numerical Methods in Fluids 67~(9) (2011)
	1160--1174.
	
\end{thebibliography}

\bibliographystyle{elsarticle-num}
\end{document}